\documentclass[11pt]{article}

\usepackage[letterpaper,margin=1.1in,marginparwidth=1.15in]{geometry}
\usepackage[T1]{fontenc}
\usepackage[utf8]{inputenc}
\usepackage{lmodern}
\usepackage{microtype}
\usepackage{setspace}
\usepackage{siunitx}
\makeatletter
\def\input@path{{sections/}{figures/}{./}}
\makeatother
\usepackage{graphicx}
\graphicspath{{figures/}}
\usepackage{booktabs}
\usepackage{threeparttable}
\usepackage{amsmath,amssymb,amsthm}
\usepackage{bbm}
\usepackage{caption}
\usepackage{xcolor}
\definecolor{NavyLink}{HTML}{10233F}
\definecolor{BrandBlue}{HTML}{3F7BD6}
\definecolor{BrandGreen}{HTML}{2FA877}
\definecolor{BrandGold}{HTML}{C9A11A}
\definecolor{BrandPurple}{HTML}{5F01D1}
\definecolor{SoftGray}{HTML}{7488A6}
\definecolor{LightBG}{HTML}{F2F5FA}

\usepackage{tikz}
\usetikzlibrary{positioning,arrows.meta,fit,backgrounds,calc,shapes.geometric,
                decorations.pathreplacing}

\usepackage[authoryear,round]{natbib}
\usepackage[colorlinks=true,allcolors=NavyLink]{hyperref}

\usepackage{authblk}

\newif\ifDRAFT
\DRAFTfalse                 %
\usepackage[colorinlistoftodos,textsize=scriptsize]{todonotes}
\ifDRAFT\else\presetkeys{todonotes}{disable}{}\fi

\newcommand{\gap}[1]{\todo[inline,color=red!12,bordercolor=red!60]%
                          {\textbf{OPEN.} #1}}

\theoremstyle{plain}
\newtheorem{proposition}{Proposition}

\theoremstyle{definition}

\theoremstyle{remark}
\newtheorem{remark}{Remark}

\newcommand{\pubs}{\mathcal{P}}
\newcommand{\blocks}{\mathcal{B}}
\newcommand{\E}{\mathbb{E}}

\newcommand{\credit}{\varphi}
\newcommand{\EQUAL}{\textsc{Equal}}
\newcommand{\PROV}{\textsc{Provenance}}
\newcommand{\COMPL}{\textsc{Completer}}

\newcommand{\NQueries}{1{,}994}          %
\newcommand{\NQueriesUsed}{1{,}974}      %
\newcommand{\NPublishers}{1{,}375}
\newcommand{\NPremium}{137}
\newcommand{\NPremiumPct}{10.0}
\newcommand{\NClusters}{64{,}419}
\newcommand{\NUrls}{7{,}553}             %
\newcommand{\NGroundings}{23{,}583}
\newcommand{\NBlockRows}{377{,}238}
\newcommand{\MeanCited}{10.66}           %

\newcommand{\MeanCovGlobal}{22.4}        %
\newcommand{\MeanCovLocal}{1.23}
\newcommand{\SingletonGlobal}{27.2}      %
\newcommand{\SingletonLocal}{86.2}
\newcommand{\SingletonNews}{10.5}
\newcommand{\SingletonComment}{49.3}

\newcommand{\ExactPrem}{33.3}
\newcommand{\ExactNonPrem}{58.4}
\newcommand{\AnyTimeAll}{87.2}
\newcommand{\NMidnight}{3{,}043}         %
\newcommand{\NNoTime}{966}

\newcommand{\EqGlobal}{41.87}
\newcommand{\ProvAsis}{49.55}
\newcommand{\ProvNoon}{49.11}
\newcommand{\ProvDayties}{46.79}
\newcommand{\ProvExact}{46.33}
\newcommand{\EqLocal}{53.75}

\newcommand{\ProvRange}{46.3\textrm{--}49.6}
\newcommand{\ProvLocalRange}{53.4\textrm{--}54.1}

\newcommand{\TokBoxPrompt}{16{,}000{,}243}

\newcommand{\TokTextPrompt}{35{,}067{,}592}

\newcommand{\TokPromptCut}{54.4}         %
\newcommand{\TokTotalCut}{52.3}          %
\newcommand{\TokRetained}{0.48}          %

\newcommand{\AlphaOne}{41.88}            %
\newcommand{\AlphaTwo}{45.11}
\newcommand{\AlphaFour}{47.30}
\newcommand{\AlphaEight}{48.64}
\newcommand{\ComplGlobal}{49.56}

\newcommand{\McNoise}{0.57}

\newcommand{\FourthGlobal}{47.57}

\newcommand{\RandArtGlobal}{45.26}

\newcommand{\TVArtPubGlobal}{5.57}
\newcommand{\TVArtPubLocal}{0.07}

\newcommand{\EqNews}{37.95}
\newcommand{\EqComment}{47.66}
\newcommand{\ProvNews}{48.51}
\newcommand{\ProvComment}{52.01}
\newcommand{\ComplNews}{42.24}
\newcommand{\ComplComment}{51.09}
\newcommand{\MeanCovNews}{35.1}
\newcommand{\MeanCovComment}{5.6}
\newcommand{\NBlocksNews}{11{,}991}
\newcommand{\NBlocksComment}{9{,}059}

\newcommand{\VetoGlobal}{66.3}
\newcommand{\VetoLocal}{97.1}

\newcommand{\NPubBase}{1{,}103}

\newcommand{\TopTenEqual}{23.2}
\newcommand{\TopTenProv}{34.6}
\newcommand{\TopTenCompl}{33.4}
\newcommand{\TopFiftyEqual}{50.1}
\newcommand{\TopFiftyProv}{68.5}
\newcommand{\TopFiftyCompl}{63.9}
\newcommand{\GiniEqual}{0.748}
\newcommand{\GiniProv}{0.885}
\newcommand{\GiniCompl}{0.844}
\newcommand{\ZeroEqual}{0.0}
\newcommand{\ZeroProv}{34.6}
\newcommand{\ZeroCompl}{12.3}

\newcommand{\GiniEqualLocal}{0.840}
\newcommand{\GiniProvLocal}{0.846}

\newcommand{\NYTEq}{0.62}
\newcommand{\NYTEqRank}{29}
\newcommand{\NYTPv}{4.70}
\newcommand{\NYTPvRank}{3}
\newcommand{\NYTRatio}{7.6}
\newcommand{\YahooEq}{2.95}
\newcommand{\YahooPv}{1.49}
\newcommand{\ReutersEq}{1.21}
\newcommand{\ReutersCm}{2.94}
\newcommand{\MilEq}{0.19}
\newcommand{\MilEqRank}{95}
\newcommand{\MilPv}{3.70}
\newcommand{\MilPvRank}{4}

\newcommand{\TVEqComplGlobal}{19.6}

\newcommand{\TVProvComplGlobal}{33.7}

\newcommand{\NPremiumUS}{54}             %
\newcommand{\NPremiumIntl}{83}           %
\newcommand{\NStandard}{1{,}238}
\newcommand{\NPremiumCountryUS}{55}
\newcommand{\NPremiumCountryFR}{15}
\newcommand{\NPremiumCountryIN}{13}
\newcommand{\NPremiumCountryUK}{9}
\newcommand{\NDupPublishers}{16}         %
\newcommand{\NTypeOther}{1{,}126}
\newcommand{\NTypeNewsroom}{149}         %

\newcommand{\EqTypeBased}{43.61}
\newcommand{\ProvTypeBased}{51.36}
\newcommand{\EqUSOnly}{20.09}
\newcommand{\ProvUSOnly}{28.70}
\newcommand{\EqIntlOnly}{21.79}
\newcommand{\ProvIntlOnly}{20.85}
\newcommand{\SpreadPooled}{7.68}
\newcommand{\SpreadTypeBased}{7.75}
\newcommand{\SpreadUSOnly}{8.61}
\newcommand{\SpreadIntlOnly}{-0.94}

\newcommand{\ErrMaxRate}{20}
\newcommand{\ErrSpreadLo}{$+7.62$}
\newcommand{\ErrSpreadHi}{$+8.12$}
\newcommand{\ErrMeanCovTwenty}{19}
\newcommand{\ErrKindFiveEq}{41.82}
\newcommand{\ErrKindFivePv}{49.56}
\newcommand{\ErrKindFiveSp}{$+7.74$}
\newcommand{\ErrKindTwentyEq}{41.66}
\newcommand{\ErrKindTwentyPv}{49.28}
\newcommand{\ErrKindTwentySp}{$+7.63$}
\newcommand{\ErrSplitFiveEq}{41.81}
\newcommand{\ErrSplitFivePv}{49.46}
\newcommand{\ErrSplitFiveSp}{$+7.65$}
\newcommand{\ErrSplitTwentyEq}{41.55}
\newcommand{\ErrSplitTwentyPv}{49.34}
\newcommand{\ErrSplitTwentySp}{$+7.79$}
\newcommand{\ErrMergeFiveEq}{41.89}
\newcommand{\ErrMergeFivePv}{49.66}
\newcommand{\ErrMergeFiveSp}{$+7.77$}
\newcommand{\ErrMergeTwentyEq}{41.97}
\newcommand{\ErrMergeTwentyPv}{50.08}
\newcommand{\ErrMergeTwentySp}{$+8.12$}

\newcommand{\MCMaxPerm}{800}
\newcommand{\MCReps}{three}
\newcommand{\MCSlopeTruth}{-0.503}
\newcommand{\MCSlopeRepLo}{-0.497}
\newcommand{\MCSlopeRepHi}{-0.517}
\newcommand{\MCTruthTwentyFive}{1.59}
\newcommand{\MCTruthTwoHundred}{0.56}
\newcommand{\MCTruthEightHundred}{0.28}
\newcommand{\MCRepRatio}{1.40}
\newcommand{\MCErrAlphaTwo}{0.50}
\newcommand{\MCErrAlphaEight}{0.74}
\newcommand{\MCErrCompl}{0.99}
\newcommand{\MCPremErr}{0.011}

\tikzset{
  every node/.append style={execute at begin node=\linespread{1}\selectfont},
  nd/.style      = {draw=NavyLink!55, fill=white, rounded corners=1.6pt,
                    inner sep=3pt, font=\scriptsize, align=center,
                    minimum height=4.6mm},
  bundle/.style  = {nd, fill=LightBG, draw=NavyLink!45, font=\small,
                    minimum width=26mm, minimum height=7mm},
  newsblk/.style = {nd, fill=BrandBlue!15, draw=BrandBlue!65},
  commblk/.style = {nd, fill=BrandGold!22, draw=BrandGold!85},
  pubnd/.style   = {nd, fill=white, draw=NavyLink!35, font=\scriptsize},
  panel/.style   = {draw=NavyLink!22, fill=LightBG, rounded corners=2.5pt,
                    inner sep=4pt, align=left, font=\scriptsize},
  lbl/.style     = {font=\scriptsize, text=SoftGray, align=center},
  tag/.style     = {font=\scriptsize\bfseries, text=NavyLink},
  ar/.style      = {-{Stealth[length=4.5pt,width=3.5pt]}, draw=NavyLink!65,
                    line width=0.55pt},
  arg/.style     = {ar, draw=BrandGreen!85},
  faint/.style   = {opacity=0.42},
  lego/.style    = {rectangle, rounded corners=1.2pt, line width=0.7pt,
                    align=center, inner sep=4pt, font=\scriptsize,
                    text=NavyLink},
  newslego/.style= {lego, fill=BrandBlue!15, draw=BrandBlue!70},
  commlego/.style= {lego, fill=BrandGold!22, draw=BrandGold!90},
  legoedge/.style= {-{Stealth[length=4pt,width=3pt]}, draw=BrandBlue!70,
                    line width=0.7pt, rounded corners=1.5pt},
  legoedgec/.style={legoedge, draw=BrandGold!90},
}

\newcommand{\legostuds}[4]{%
  \foreach \sdx in {#4}{%
    \node[rectangle, rounded corners=0.5pt, fill=#2, draw=#3, line width=0.5pt,
          minimum width=2.4mm, minimum height=1.3mm, inner sep=0pt,
          anchor=south] at ([xshift=\sdx mm]#1.north) {};}}
\newcommand{\newsstuds}[2]{\legostuds{#1}{BrandBlue!15}{BrandBlue!70}{#2}}
\newcommand{\commstuds}[2]{\legostuds{#1}{BrandGold!22}{BrandGold!90}{#2}}

\newcommand{\figblocktree}{%
\begin{tikzpicture}
  \node[bundle, minimum width=32mm] (art) at (0,2.75)
       {\footnotesize Article, publisher $p$};

  \node[newslego, text width=38mm] (n1) at (0,1.25)
       {Ceasefire announced 04:00 GMT};
  \newsstuds{n1}{-15,-5,5,15}

  \node[newslego, text width=28mm] (n2) at (-4.0,-1.20)
       {Two sites reported struck};
  \node[newslego, text width=28mm] (n3) at (0,-1.20)
       {Enrichment paused, says the IAEA};
  \node[commlego, text width=28mm] (c1) at (4.0,-1.20)
       {A fragile truce, settling nothing};
  \newsstuds{n2}{-11,-4,4,11}
  \newsstuds{n3}{-11,-4,4,11}
  \commstuds{c1}{-11,-4,4,11}

  \node[commlego, text width=28mm] (c2) at (0,-3.20)
       {Domestic politics drove the pause};
  \commstuds{c2}{-11,-4,4,11}

  \coordinate (j) at (0,0.42);
  \draw[ar] (art.south) -- (n1.north);
  \draw[draw=NavyLink!45, line width=0.7pt] (n1.south) -- (j);
  \draw[legoedge]  (j) -| (n2.north);
  \draw[legoedge]  (j) -- (n3.north);
  \draw[legoedgec] (j) -| (c1.north);
  \draw[legoedgec] (n3.south) -- (c2.north);
  \node[lbl, text width=32mm, anchor=north] at (-2.0,0.32)
       {one action may precipitate another};
  \node[lbl, text width=32mm, anchor=north] at (2.05,0.32)
       {commentary hangs from a news action};

  \node[newslego, text width=22mm] (kn) at (-4.3,-4.60) {\textbf{news}};
  \newsstuds{kn}{-7,0,7}
  \node[lbl, anchor=west, align=left, text width=50mm] at (-2.8,-4.60)
       {an action that changed the world};
  \node[commlego, text width=22mm] (kc) at (-4.3,-5.50) {\textbf{commentary}};
  \commstuds{kc}{-7,0,7}
  \node[lbl, anchor=west, align=left, text width=50mm] at (-2.8,-5.50)
       {an opinion, reaction, context or analysis on the news action it hangs from};
\end{tikzpicture}}

\newcommand{\figindexbuild}{%
\begin{tikzpicture}
  \foreach \i/\y/\p in {1/0/{Publisher A, 06:12}, 2/-2.0/{Publisher B, 07:40},
                        3/-4.0/{Publisher C, 11:05}} {
    \node[pubnd, anchor=west, minimum width=30mm] (p\i) at (-6.9,\y) {\p};
  }
  \node[newslego, text width=22mm, anchor=west] (a1) at (-3.4,0.42)  {ceasefire at 04:00};
  \node[commlego, text width=22mm, anchor=west] (a2) at (-3.4,-0.62) {a fragile truce};
  \node[newslego, text width=22mm, anchor=west] (b1) at (-3.4,-1.58) {ceasefire at 04:00};
  \node[newslego, text width=22mm, anchor=west] (b2) at (-3.4,-2.62) {two sites struck};
  \node[newslego, text width=22mm, anchor=west] (c1) at (-3.4,-3.58) {ceasefire at 04:00};
  \node[commlego, text width=22mm, anchor=west] (c2) at (-3.4,-4.62) {a win for hardliners};
  \foreach \n in {a1,b1,b2,c1} { \newsstuds{\n}{-8,-3,3,8} }
  \foreach \n in {a2,c2}       { \commstuds{\n}{-8,-3,3,8} }

  \foreach \s/\t in {p1/a1, p2/b1, p3/c1} { \draw[ar] (\s.east) -- (\t.west); }
  \draw[legoedgec] (a1.south) -- (a2.north);
  \draw[legoedge]  (b1.south) -- (b2.north);
  \draw[legoedgec] (c1.south) -- (c2.north);

  \node[panel, anchor=west, text width=42mm] (idx) at (0.9,-1.30)
    {\scriptsize\textbf{Index entry $b$}\\[1pt]
     \emph{ceasefire announced at 04:00}\\[2pt]
     publishers $P_b=\{A,B,C\}$\\
     first published 06:12 (A)\\
     kind: news};
  \foreach \s in {a1,b1,c1} { \draw[arg] (\s.east) -- (idx.west); }

  \node[panel, anchor=west, text width=42mm, fill=BrandGold!10,
        draw=BrandGold!40] (idx2) at (0.9,-4.30)
    {\scriptsize\textbf{Index entry $b'$}\\[1pt]
     \emph{a win for hardliners}\\[2pt]
     publishers $P_{b'}=\{C\}$\\
     kind: commentary};
  \draw[ar, draw=BrandGold!85] (c2.east) -- (idx2.west);

  \node[lbl, anchor=north, text width=112mm] at (-1.4,-5.55)
    {Built once, before any query arrives. Each publisher contributes a block
     tree; clustering runs across trees, so a claim carried by several outlets
     becomes one row listing all of them.};
\end{tikzpicture}}

\newcommand{\figanswering}{%
\begin{tikzpicture}
  \node[nd, fill=white, draw=NavyLink!45, text width=27mm] (q) at (-5.5,0)
       {\scriptsize ``Did the ceasefire hold, and who gained?''};
  \node[panel, text width=30mm] (ret) at (-2.0,0)
       {\scriptsize\textbf{Search engine}\\[1pt] retrieved grounding URLs which are mapped to lego trees from the index.};
  \node[panel, text width=30mm] (gen) at (1.9,0)
       {\scriptsize\textbf{Generate}\\[1pt] compose the answer, citing the
        block ids actually used};
  \node[nd, fill=BrandGreen!10, draw=BrandGreen!60, text width=30mm] (ans)
       at (5.6,0)
       {\scriptsize The truce held [b\,17, b\,42]; analysts read it as a
        domestic win [b\,88]};
  \draw[ar] (q) -- (ret);  \draw[ar] (ret) -- (gen);  \draw[ar] (gen) -- (ans);

  \node[lbl, anchor=north, text width=44mm] at (-0.05,-1.25)
       {the model does not see a publisher name, only block text and block ids};
  \node[lbl, anchor=north, text width=34mm] at (5.6,-1.25)
       {the citation list \emph{is} the usage record};
\end{tikzpicture}}

\newcommand{\figattribution}{%
\begin{tikzpicture}
  \node[nd, fill=BrandGreen!10, draw=BrandGreen!60, text width=26mm] (cited)
       at (-5.4,0) {\scriptsize cited blocks\\ $\{b_{17}, b_{42}, b_{88}\}$};

  \node[panel, text width=40mm] (tab) at (-1.7,0)
    {\scriptsize
     \begin{tabular}{@{}ll@{}}
       $b_{17}$ & $\{A,B,C\}$, first $=A$\\
       $b_{42}$ & $\{B,D\}$, first $=D$\\
       $b_{88}$ & $\{C\}$, first $=C$\\
     \end{tabular}\\[2pt]
     \emph{read straight off the index}};

  \node[panel, text width=33mm, fill=BrandPurple!7, draw=BrandPurple!35]
       (rule) at (2.5,0)
    {\scriptsize\textbf{Apply the rule}\\[1pt]
     pick an order over publishers and a utility over coverage; the credit
     vector follows};

  \node[nd, text width=24mm, fill=white] (pay) at (6.1,0)
    {\scriptsize \textbf{payout, \EQUAL}\\[2pt]
     \begin{tabular}{@{}lr@{}}
       $A$ & 0.33\\ $B$ & 0.83\\ $C$ & 1.33\\ $D$ & 0.50\\
     \end{tabular}};

  \draw[ar] (cited) -- (tab);
  \draw[ar] (tab) -- (rule);
  \draw[ar] (rule) -- (pay);

  \node[lbl, anchor=north, text width=54mm] at (0,-1.5)
    {No language model runs here. Settlement costs arithmetic, and the same
     inputs always produce the same payment --- so the ledger can be
     audited by anyone holding the index.};
\end{tikzpicture}}

\newcommand{\figdials}{%
\begin{tikzpicture}
  \node[tag] at (0,2.95) {order over publishers};
  \node[lbl] at (-1.95,2.5) {uniformly random};
  \node[lbl] at ( 1.95,2.5) {chronological};

  \node[tag, rotate=90] at (-6.7,0) {utility over coverage};
  \node[lbl, anchor=east, text width=20mm] at (-4.3, 1.3)
       {linear\\ $u(x)=x$};
  \node[lbl, anchor=east, text width=20mm] at (-4.3,-1.3)
       {all-or-nothing\\ $u(x)=x^{\alpha}$,\\ $\alpha\to\infty$};

  \node[panel, text width=30mm, minimum height=20mm, align=left,
        fill=BrandBlue!10, draw=BrandBlue!45] at (-1.95, 1.3)
       {\scriptsize\textbf{\EQUAL}\\ split each block equally among its
        publishers\\[2pt] \emph{the Shapley value}};
  \node[panel, text width=30mm, minimum height=20mm, align=left,
        fill=BrandGold!14, draw=BrandGold!55] at ( 1.95, 1.3)
       {\scriptsize\textbf{\PROV}\\ the first publisher takes the whole
        block\\[2pt] \emph{a scoop premium}};
  \node[panel, text width=30mm, minimum height=20mm, align=left,
        fill=BrandGreen!12, draw=BrandGreen!45] at (-1.95,-1.3)
       {\scriptsize\textbf{\COMPL}\\ credit lands on whoever completes the
        answer\\[2pt] \emph{rewards the rare block}};
  \node[panel, text width=30mm, minimum height=20mm, align=left,
        fill=BrandPurple!7, draw=BrandPurple!35] at ( 1.95,-1.3)
       {\scriptsize\textbf{fourth corner}\\ first publisher, but only where
        coverage is thin\\[2pt] \emph{implied by the dials}};
\end{tikzpicture}}

\title{\vspace{-3.5em}\bfseries Paying for Information, Not for Clicks:\\[0.2em]
       A Block-Level Publisher Content Marketplace for AI Answers}

\author[1]{Markus Mobius}
\author[6]{Luca Braghieri}
\author[6]{Sarah Eichmeyer}
\author[1]{Daniel G.~Goldstein}
\author[1]{Nicole Immorlica}
\author[2]{Oliver Kiss}
\author[7]{Ro'ee Levy}
\author[1]{Brendan Lucier}
\author[3]{Ishan Mahajan}
\author[4]{Jingyi Qiu}
\author[4]{Tanya Rosenblat}
\author[1]{David M. Rothschild}
\author[5]{Andras Sari}
\author[1]{Aleksandrs Slivkins}
\author[3]{Colin Watson}

\affil[1]{Microsoft Research}
\affil[2]{Spreadmonitor}
\affil[3]{Microsoft}
\affil[4]{University of Michigan}
\affil[5]{Central European University}
\affil[6]{University of Bocconi}
\affil[7]{Tel Aviv University}

\date{\vspace{-2.2em}\today\\[0.3em]\normalsize\textbf{Working paper}}

\begin{document}
\maketitle
\vspace{-3.1em}

\renewcommand{\thefootnote}{\fnsymbol{footnote}}
\footnotetext[1]{Mobius is first author.  All other authors are listed alphabetically. The lego decomposition
approach was developed by Braghieri, Eichmeyer, Kiss, Levy, Mobius, Rosenblat and Sari. Qiu collaborated with this group on adapting the decomposition into the lego implementation. Goldstein, Immorlica, Lucier, Mobius, Rosenblat, Rothschild and Slivkins developed
the theory framework for measuring attribution.  Mahajan and Watson helped frame the
problem.  All errors are our own.}
\renewcommand{\thefootnote}{\arabic{footnote}}
\setcounter{footnote}{0}

\begin{abstract}
\singlespacing\small
\noindent

We propose a publisher content marketplace that allocates payments for information used in AI-generated answers. The unit of attribution is a ``lego block'': a distinct fact or piece of commentary that may appear in articles from multiple publishers. A simple news-reading model connects three payment rules: equal sharing among all publishers reporting on a block, rewarding the publisher who reported a block first, and rewarding indispensable contributions. Each can be represented within a Shapley-value framework.

We evaluate the framework using answers to \NQueriesUsed{} search-engine queries about a heavily covered news episode. When attribution considers only the articles retrieved for an individual query, most cited blocks appear unique, and the rules produce nearly identical allocations. A \textit{global index} covering the full news corpus reveals overlap across publishers, making the choice of payment rule consequential. Premium publishers, defined as established outlets with substantial newsgathering operations, receive \num[round-mode=places,round-precision=0]{\EqGlobal}\% of allocated value under equal sharing, compared with \num[round-mode=places,round-precision=0]{\ProvExact}--%
\num[round-mode=places,round-precision=0]{\ProvAsis}\% under a first-publication rule, depending on the treatment of publication timestamps. These findings highlight the importance of a shared global index that records which publishers provide each block and when they published it. Attribution restricted to the sources retrieved for an individual answer lacks the information needed to distinguish these payment policies.

\smallskip
\noindent\textbf{Keywords:} content marketplaces, attribution, Shapley
value, generative search, news economics, market design.

\noindent\textbf{JEL:} D47, L82, L86, O34.
\end{abstract}

\clearpage
\section{Introduction}
\label{sec:intro}

Historically, publishers of online content were paid in clicks. An intermediary (e.g., search engine or a social feed) showed a headline, the reader clicked through to the publisher's page, and the publisher 
monetized the visit. The intermediary profited, too, receiving a payment for the referral to the publisher.

Generative search breaks this arrangement. When an assistant answers a question directly, 
it summarizes the publisher's reporting while providing the publisher almost no traffic. \citet{pew2025} 
find that users click through on 8\% of searches carrying AI summaries (compared to 15\% without), 
and click a link \emph{inside} the summary on just 1\%. 
Yet the underlying content is more valuable to the intermediary than ever, as a grounded answer cannot be generated without it.

This shift has created a market-design problem that platforms are actively attempting to solve: finding alternative channels through which to value content and pay publishers. For example,  
Cloudflare's Pay-Per-Crawl marketplace~\citep{allen2025} and Amazon's AWS AI traffic monetization system~\citep{kayabali2026}, unofficially dubbed the ``AWS AI Content Marketplace,'' allow publishers to charge AI crawlers for content access. While these systems allow publishers to set prices, they do not directly assist with price discovery or value determination.
In February 2026, Microsoft announced a Publisher Content Marketplace to pay 
publishers on ``delivered value'' through usage-based reporting, launching with partners 
such as the Associated Press, Cond\'e Nast, Vox Media, and the USA TODAY Company 
\citep{frank2026}. While this announcement establishes the \emph{ambition} of paying 
for the content that's actually used, it leaves the underlying \emph{mechanism} open.

Why is it challenging to  monetize content directly?
The difficulty begins with the unit of account. Media consumption has unbundled over two decades. News used to be consumed in newspapers consisting of collections of articles.  With the rise of the internet, that was unbundled into individual articles, and then rebundled through news aggregators. 
Music underwent 
a similar shift from album to track to multi-source playlist, re-pricing the underlying media at every step. 

With AI, there is an additional unbundling of content, from full coherent pieces like news articles, to their components -- facts or comments.  The market for publisher content must adjust its pricing model accordingly: if a marketplace's unit of 
account remains the article or webpage, it measures something that is no longer consumed.
This suggests monetizing at a more granular level, but facts and comments are difficult to price. They are \emph{non-rival} (one reader's consumption does
 not deplete the fact or comment) \citep{jones2020} and \emph{non-excludable} (once published, anyone can legally re-report it) \citep{arrow1962}. 
 In practice, non-excludability yields rampant \emph{duplication}. In our data, a cited news 
 fact appears in the reporting of a mean of \MeanCovGlobal{} distinct publishers: one filed first, 
 while others carried the same fact through independent reporting or aggregation.

This paper makes four core contributions:

\paragraph{Contribution 1: A unit of account for news and answer generation.}
We propose a fine-grained, interpretable unit of account for news content. News reporting can be broken down into atomic \emph{``lego'' blocks}: \emph{news} blocks record state-changing actions in the world, while \emph{commentary} 
blocks record stances, opinions, or context regarding those actions. 

This unit offers two primary advantages over existing approaches. It is \emph{explainable}: 
blocks are short declarative sentences that publishers can audit directly without relying on obscure 
model saliency scores or counterfactual re-generations\footnote{Counterfactual re-generation techniques estimate source influence at inference time 
by repeatedly removing retrieved documents from the context window, re-running the language model, 
and measuring changes in the output text (see, e.g., \citealp{cohenwang2024, qi2024}).}. It is also \emph{comparable}: a single comment or fact, reported by many distinct publishers, is a single block, reducing redundancy.

\paragraph{Contribution 2: A system for indexing articles and answering queries with attribution.}

We describe a system that constructs a coherent index of lego blocks from a corpus of news articles. 
Each crawled article is decomposed 
by a language model into constituent blocks and arranged into a tree that represents their relationships.
Identical blocks across publishers 
are clustered and deduplicated into a shared, tagged index that records every atomic claim, its publishers, and their respective timestamps. This generates a shared global index of lego blocks that represent the content of multiple articles.

We demonstrate 
that building this representation at index 
time simplifies the query response process at inference time, allowing queries to be settled via straightforward table lookups. Furthermore, 
generating answers from deduplicated blocks significantly reduces 
token overhead by inserting shared facts or comments into the model's context window once rather than repeatedly. Finally, since the index retains publisher information for each block, provenance is retained and credit attribution can be calculated whenever an answer is generated.  In contrast, an index constructed at inference time retains only knowledge of a single source of a given block but not the collection or reporting times.  Thus computing attribution at inference time, e.g., by perturbing or inspecting retrieved 
context post-query \citep{cohenwang2024, qi2024, nematov2025},  
cannot distinguish an originator from a copier because that global context was discarded prior to query execution, limiting its ability to implement sophisticated attribution rules.

\paragraph{Contribution 3: A unifying economic theory of reader value and credit attribution.}
A primary motivation for credit attribution is determining how to appropriately divide payments among publishers for their content.
Once news reports map to lego blocks, the payment problem simplifies: for every block 
used in an answer, the index identifies who reported it and when. This information can be used as input to a \emph{payment rule} that compensates publishers.

We describe a flexible space of payment rules that can be used. We evaluate three intuitive payment rules drawn from existing literature:
\begin{enumerate}
\item Equal division among all publishers that carried the block (\EQUAL{}), mirroring music-streaming pro-rata models \citep{datta2018};

\item Exclusive payment to the originator (\PROV{}), addressing the under-rewarding of original scoops \citep{cage2020restud};

\item Payment based on indispensability (\COMPL{}), following the default posture of data valuation via the Shapley value \citep{ghorbani2019, jia2019}.
\end{enumerate}

We show that all three rules can be interpreted as Shapley values for different formulations of a content-creation game.
Imagine a reader who reviews grounded articles in sequence and registers new blocks as she reads while ignoring blocks she already encountered. 
Each publisher is compensated for the marginal utility added when the reader reads their article.
Setting the reading order to \emph{random} and reader utility to \emph{linear} in the facts discovered yields \EQUAL{}; 
setting the order to \emph{chronological} yields \PROV{}; and holding the order random while 
pushing utility toward an \emph{all-or-nothing} model yields \COMPL{}. This framework provides a 
structured foundation for evaluating payment mechanisms, mapping them as points along two 
parameterized economic dimensions: the \emph{order dial} (whether speed deserves reward) and 
the \emph{utility dial} (how heavily completeness is valued).\footnote{The model is also generative: setting both dials to \emph{chronological} and \emph{all-or-nothing} 
reveals an unstudied fourth rule that compensates the \emph{last} publisher needed to complete an answer. 
We view its emergence as a by-product of the two dials rather than a proposal.}

\paragraph{Contribution 4: Empirical findings and index-time necessity.}
We implement and test our system against a broad corpus of news articles related to the Iran/US war in the summer of 2026. 
We find that basing attribution on a global index rather than a local one has a large empirical impact. Against a global index, only \SingletonGlobal\% of cited blocks have a single publisher, and cited blocks have an average of \MeanCovGlobal\% publishers. 
We find that different payment rules separate clearly, shifting premium publishers' revenue share from \EqGlobal\% under \EQUAL{} to \ProvRange\% under \PROV{}. 
Against a local index, however, \SingletonLocal\% of cited blocks appear unique within the retrieval set (with a mean coverage of \MeanCovLocal{}), causing every payment 
rule in the family to collapse to the same payout within one percentage point. This suggests that credit attribution is most meaningful when evaluated against a global pre-constructed index:
attribution rules matter only when duplication is visible, and duplication is only visible globally. We also find that the \EQUAL{} rule raises the relative value of commentary over original news reporting, due to commentary blocks being less duplicated than news blocks. This suggests that a credit attribution system based on provenance can meaningfully help to reward original reporting by premium publishers in a content marketplace. On the other hand, we find that the \PROV{} rule concentrates payments toward a small number of large originators, away from the tail of smaller publishers.

\paragraph{Roadmap} The paper proceeds as follows. Section~\ref{sec:literature} reviews related literature. 
Section~\ref{sec:lego} describes the index, its construction, and its use in answer generation. 
Section~\ref{sec:mechanism} describes attribution scoring options and develops the unified Shapley mechanism. 
Section~\ref{sec:empirical} describes our datasets and %
empirical results, 
and Section~\ref{sec:conclusion} concludes.

\section{Related literature}
\label{sec:literature}

This paper proposes a market-design approach to news markets in an AI-centric world.  As such, the research relates to prior work on the economics of news, referral systems, product unbundling, information goods, data valuation, generative attribution, and relevant legal frameworks.  We discuss each in turn.

\paragraph{Economics of News Reporting.}
A long tradition studies how news market structures shape reporting incentives \citep{mullainathan2005, gentzkow2010, gentzkow2015, prat2013}.
Closest to our work is \citet{cage2020restud}, who document that a significant share of online news is aggregated or 
copied rather than originally reported, leaving original reporting under-rewarded as copies appear 
almost instantaneously. Their finding provides the empirical motivation for this paper. Our framework 
measures this phenomenon at the level of individual facts rather than entire articles: we show that a 
cited news block is carried by a mean of \MeanCovGlobal{} publishers but originated by 
just one, and our order dial explicitly dictates whether that head start is monetized. 
Related work by \citet{cage2020jpube} and \citet{angelucci2024} explores the broader implications for media competition and public information diets.

\paragraph{Aggregators and the Unbundling of Media.}
How news content is monetized has been studied extensively through the lens of search engines and news aggregators, 
focusing on the traditional referral-traffic bargain \citep{athey2021, chiou2017, calzada2020, aguiar2021}. 
Generative search changes this dynamic fundamentally: rather than reducing referral traffic, it eliminates it \citep{pew2025, newman2025}. 
Consequently, the market-design challenge shifts from analyzing bargaining power to establishing a precise metering standard. 
Unbundling content from published bundles to atomic components is a well-established economic trajectory \citep{bakos1999}, 
seen most clearly in the music industry's transition from albums to tracks \citep{datta2018}. Music streaming 
faced a nearly identical challenge---distributing a shared subscription pool across rights holders---and settled on pro-rata and 
user-centric allocation rules that closely mirror some of the credit attribution rules we consider. 

\paragraph{Economics of Information Goods.}
Underpinning this structure is the economics of non-rival information goods \citep{jones2020}, where market 
transactions are famously constrained by Arrow's Information Paradox: a buyer cannot evaluate information without 
consuming it, yet once consumed, cannot ``un-see'' it \citep{arrow1962, bergemann2019, bergemann2022}. A block-level index 
addresses this paradox by making content inspectable in granular units that can be priced without exposing the full article text. 
More broadly, our approach aligns with the normative framework of treating data as labor in digital economies \citep{arrieta2018}.

\paragraph{Technical Data Valuation.}
Technically, our framework connects to the literature on data valuation in machine learning. 
The Shapley value has emerged as the standard theoretical instrument \citep{ghorbani2019, jia2019, wang2025}, 
often estimated via influence functions or datamodels due to computational constraints \citep{koh2017, ilyas2022, park2023},
 with applications extending to royalty schemes for generative platforms \citep{wang2024copyright}. Crucially, however, 
 that literature values data for \emph{training}; we value content for \emph{serving} (retrieval augmented generation). 

\paragraph{Attribution vs. Settlement in Generative AI.}
Generative attribution literature serves as the primary contrast class for Section~\ref{sec:globalindex}. Current technical 
approaches evaluate citations at inference time \citep{rashkin2021, bohnet2022, gao2023, liu2023} or attribute generated text 
via context ablation, perturbation, or internal model introspection \citep{cohenwang2024, qi2024, goldshmidt2024, nematov2025}. 
Our claim is not that these methods fail at their intended task, but rather that their task is answer \emph{verification}, 
not financial \emph{settlement}. Because inference-time methods are constrained to a query's local retrieval set by construction, 
they are structurally incapable of executing content compensation policies---a limitation we formalize in Section~\ref{sec:globalindex} 
and quantify in Section~\ref{sec:results}. One exception is ProRata.ai~\citep{businesswire2024prorata}, which employs attribution methods on generated text to assign fractional credit and compensation to different data sources. Their approach is notably post-generation, applying to output text and external to the generation process, whereas our proposed marketplace attributes credit for data retrieved during generation itself. Complementary systems literature on passage granularity \citep{chen2023} and 
corpus deduplication \citep{lee2022} provides additional engineering support for our block-level representation.

\paragraph{Legal and Policy Context.}
Finally, legal frameworks surrounding fair use, content licensing, and copyright consent are evolving rapidly 
\citep{lemley2021, henderson2023, sag2023, longpre2024}. We take no stance on legal doctrines, but emphasize 
that regardless of the legal theory adopted, almost every proposed remedy requires a robust, fine-grained measurement 
layer that does not currently exist.

\section{The Lego Index}
\label{sec:lego}

This section describes the global index of lego blocks, which serves as the settlement layer of our proposed marketplace. The index 
represents a corpus of articles in a form that makes answer retrieval and economic attribution both feasible and 
computationally efficient. 
The index 
is a relational table of atomic claims, each tagged with the outlets that reported it and their respective publication timestamps. 
Storing these two fields directly, rather than recomputing them per query, reduces attribution to a  
table lookup.

\subsection{Decomposing an article into a block tree}

Our index decomposes the text of each article in a corpus into a tree of atomic \emph{blocks}. 
Blocks fall into two distinct functional categories:

\begin{description}
    
\item[News blocks] record an \emph{action}: a verifiable state change in the world brought about by individuals, organizations, 
or nature (e.g., a vote, a strike, an attack, or a hurricane). They document who acted, what occurred, when, and where. 
Because these facts form the skeleton of news articles, they are frequently duplicated across outlets.

\item[Commentary blocks] record an opinion or reaction (typically through a quote for non-opinion pieces), piece of context, or analytical stance \emph{about} a news action. 
They tend to be less duplicated across publishers and are often optional: the reporting of the underlying event can stand without them.
\end{description}

Separating news from commentary  isolates the two economically distinct goods a publisher sells.
Facts tend to be reported by multiple publishers, so provenance-based attribution methods significantly impact them.  Commentary, on the other hand, is often unique to a single publisher, causing it to play a different role in attribution.  Section~\ref{sec:results} investigates the differential impact of attribution rules on facts and commentary.

Extracted blocks are organized into a structured \emph{tree} rather than an unorganized bag of claims. 
Each commentary block is attached to the specific news block it references. Similarly, a news block may logically follow a preceding news block 
that precipitated it (e.g., a ceasefire hanging below the strikes that preceded it). The decomposition of an article into a tree of blocks takes several language-model passes and runs as soon as the search index crawler discovers a new article.  Figure~\ref{fig:blocktree} illustrates a worked example.

We maintain this tree structure for two reasons.  First, it is essential for clustering and deduplication, as we describe in Section~\ref{sec:clustering}.
Second, the tree representation serves a vital role during answer generation, as we discuss in Section~\ref{sec:answering}.  Furthermore, although we do not address it here, attribution methods for certain types of content, such as academic papers, might wish to provide attribution to blocks that are predecessors of cited ones.

\begin{figure}[tbp]\centering\figblocktree\caption{\textbf{A block tree.} One article 
decomposed into news and commentary blocks. 
}
\label{fig:blocktree}
\end{figure}

\subsection{Clustering across publishers}
\label{sec:clustering}

Once a block trees have been constructed for each individual article, they are combined into a unified index.
Blocks expressing equivalent semantic content are clustered \emph{across} and \textit{within} publishers, mapping a block reported by twenty outlets 
to a single index row with twenty associated publishers rather than twenty separate entries. This clustering and deduplication process relies heavily on the tree structure constructed for each article. In isolation, a commentary sentence like \emph{``This decision was a mistake''} cannot be meaningfully matched across publishers: two such statements 
in different outlets might refer to entirely different news actions. The parent node establishes the necessary context, so deduplication clustering necessarily evaluates a commentary block together with its parent news block.

These two steps---transforming an article into a lego tree of blocks and then deduplicating those blocks---transform a list of URLs into the 
\textit{global index}. Each cluster row records every contributing publisher along with its publication timestamp. 
Figure~\ref{fig:indexbuild} traces this ingestion process: shared claims stack additional publisher tags on existing rows, while novel claims append new rows.

\begin{figure}[tbp]
\centering
\figindexbuild
\caption{\textbf{Building the index.} Each publisher contributes a block tree; blocks from separate trees that express the same claim collapse into a single index row. The row records who carries the claim and who filed first.}
\label{fig:indexbuild}
\end{figure}

\subsection{Properties, scope, and limitations}

The global index is a representation of a news corpus, optimized for attribution. 
As such, the index is bounded by the reach of the corpus upon which it is based.
Furthermore, the index measures \emph{coverage}, not \emph{veracity}. It records that an outlet published a claim, not whether the claim is factual, and 
no payment rule in this paper conditions payout on truthfulness. In practice, this distinction is less restrictive than it might appear. 
Across the broader media ecosystem, fabricated news represents a tiny fraction of total consumption \citep{allen2020, grinberg2019, guess2020}, meaning most 
news blocks record real-world events. Commentary blocks, by contrast, frequently quote external statements where accurate reporting and factual accuracy diverge. 
The index faithfully records the reporting of the claim; distributing payments based on verified truth 
would require an auxiliary verification signal outside the index itself. In any case, accurately reporting a false statement made by someone else is
not fabrication, and the commentary block is a correct record of the reporting either way. 

The global index can easily be made inspectable: publishers could directly inspect their indexed lego blocks, verify timestamp records, and audit the set of co-reporting publishers. 
We expect this transparency to be a prerequisite for any practical, functional content marketplace, as publishers will retain the ability to audit metrics upon which their payments are based.

\subsection{Answering queries and attributing credit using the index}
\label{sec:answering}

When a query arrives, a search engine retrieves a grounding set of relevant articles. 
The choice of grounding set is optimized by the retrieval system and is not under the direct control of our system; it may or may not include all applicable articles, but is designed to include sufficient coverage to address the given query. %
In a conventional retrieval-augmented generation (RAG) pipeline, 
the raw text of those articles is placed directly into the model's context window, and the model 
generates an answer accompanied by URL citations. In our pipeline, the retrieved articles 
are instead represented by their pre-computed lego block trees. The model is prompted to 
synthesize its response using these deduplicated blocks and to explicitly cite the unique block 
identifiers ($b$) it relied upon. Figure~\ref{fig:answering} illustrates this query pipeline.

\begin{figure}[tbp]
\centering
\figanswering\caption{\textbf{Answering Queries.} A query arrives, deduplicated block trees rather than raw article text are placed into context, and the model cites specific block identifiers in its response.}\label{fig:answering}
\end{figure}

Deduplicating facts before insertion into the 
context yields a computational benefit by reducing context length. Constructing a global index therefore does not inflate query-time generation costs; rather, by removing redundant 
facts prior to context assembly, it makes answer generation cheaper. We evaluate these savings empirically in Section~\ref{sec:results}.

Finally, credit for the answer must be divided among publishers. 
Fortunately, this credit attribution does not impose a significant extra calculation cost on the answer generation process. Because the model explicitly outputs cited block identifiers, and each block is associated with a list of articles to which it can traced, attribution requires no additional post-hoc LLM calls or context perturbations. Furthermore, publishers reporting a block can receive attribution even if they are not returned by grounding as the global index contains a record of their report alongside the block itself.
Settlement reduces to a lookup against the indexed articles. What remains is to chose how attribution is calculated, which is detailed in Section~\ref{sec:mechanism}.

\subsection{Scope and Timing}

We emphasize two aspects of credit attribution in our system.  First, it is evaluated against the global index of all publishers, as opposed to a local index consisting only of publishers represented in the grounding set returned for a given query.  Second, attribution of credit for each block is computed at index time, while the global index is being built; not at inference time, at the point where a query is to be answered.  

Attribution performed at index time and at a global scope requires an up-front cost to build and maintain the index, but saves on per-query computation costs and provides a more holistic view of the content landscape from which to evaluate provenance.  This distinguishes our method from various inference-time attribution approaches, such as asking the model to cite its sources
\citep{gao2023, bohnet2022}; perturbing the context and re-running, as in
ContextCite \citep{cohenwang2024} or Shapley over retrieved documents
\citep{nematov2025}; and reading the model's internals
\citep{qi2024}. While these methods are effective at explaining and verifying generation, we feel that global index-time attribution is desirable for credit attribution.  We discuss this comparison further in Appendix~\ref{sec:globalindex}.

\section{Credit Attribution Methods}
\label{sec:mechanism}

How should credit for an answer be divided among publishers? There are three natural credit attribution rules:
divide credit equally among every publisher associated with a block cited in the answer;
assign credit to whomever reported the block earliest, or give credit for marginal contribution over what others could provide.
This section shows that all three emerge from a
stylized model of news reading as we vary the model parameters.
Section~\ref{sec:shapley} then shows that these attribution rules can be interpreted as Shapley values of variations of a content generation game.

\subsection{The three candidate rules}
\label{sec:rules}

Fix a query. Let $\blocks$ be the set of blocks cited in the answer,
$n=|\blocks|$, and for each block $b\in\blocks$ let $P_b\subseteq\pubs$ be
its \emph{covering set}: the publishers whose articles contain that block.
Let $t_{p,b}$ be the time at which publisher $p$ published its article
carrying $b$. One unit of value is to be divided among publishers for
each query. An attribution rule $\credit$ assigns to each publisher $p$ a share of this value, $\credit_p \in [0,1]$.  We focus on the following three candidate rules.

\begin{description}
\item[\EQUAL{}.] Each block is worth $1/n$ and is split evenly among its
  publishers:
  $$\credit^{\EQUAL{}}_p=\frac{1}{n}\sum_{b\in\blocks}\frac{\mathbbm{1}\{p\in P_b\}}{|P_b|}.$$
\item[\PROV{}.] Each block's value goes entirely to whoever published it
  first, with ties split evenly:
  $$\credit^{\PROV{}}_p=\frac{1}{n}\sum_{b\in\blocks}\frac{\mathbbm{1}\{p\in\arg\min_{q\in P_b}t_{q,b}\}}{|\arg\min_{q\in P_b}t_{q,b}|}.$$
\item[\COMPL{}.] Each publisher is, loosely speaking, credited for the unique blocks it contributes.
  \citep{ghorbani2019, jia2019}. Unlike the
  other two, this rule is typically defined as a Shapley value rather than expressed in closed-form.
  Consider the cooperative game in which a coalition $S\subseteq\pubs$ has value $v_{all}(S)=\mathbbm{1}\{S\cap P_b\neq\emptyset\ \text{for all }
  b\in\blocks\}$ if it can supply \emph{every} block of the answer and $0$
  otherwise.  Let $\sigma$ be a permutation of publishers, $S_{\sigma(p)}=\{p':\sigma^{-1}(p')<\sigma^{-1}(p)\}$ be the publishers that appear before $p$ in $\sigma$,\footnote{Note $S_{\sigma(p)}$ can equivalently be defined as a random subset of publishers excluding $p$; the definition in terms of permutations is helpful for our characterization of attribution rules in a common framework and also useful for computing the Shapley value.} and let
  \[
\credit_p^{\COMPL{}}=E_\sigma[v_{all}(S_{\sigma(p)}\cup\{p\}])-v_{all}(S_{\sigma(p)})]
\]
   be the Shapley value of publisher $p$. An
  answer that is missing anything is worth nothing, so a publisher earns
  only in those permutations in which it is the one that completes
  the set. The rule has no closed form and we evaluate it by sampling
  permutations.
\end{description}

\begin{remark}
\label{sec:reservation}

Because $v_{\mathrm{all}}$ is equal to $1$ only when
\emph{every} block is covered, 
publishers that supply  duplicated blocks 
only receive credit for such a block in permutations in which they are the first among the duplicate publishers.
The residual can be small, as we report in Section~\ref{sec:concentration} on our empirical data. 
The rule is therefore close to being a
rule that pays for exclusivity alone.
Exclusivity is often a knife-edge property of the index rather than of the world. Adding just one outlet carrying an otherwise exclusive block removes an existing publisher's exclusivity; failing to merge two wordings of the same fact into a single block  can create exclusivity. Under \EQUAL{} an error of that kind moves a credit allocation from
$1/|P_b|$ to $1/(|P_b|\pm1)$ and washes out across blocks; under \COMPL{}
it is the difference between a large share and nothing. This also makes the rule susceptible to manipulation, since the way to become indispensable is to
publish what nobody else will match, which rewards being hard to cluster
rather than being first or being right.  For these reasons, we view the completer rule as inherently less robust than the other options.
\end{remark}

\subsection{A Unifying Framework: The Reading Model}
\label{sec:readingmodel}

Consider a stylized reader---think of her as the buyer's proxy---who is
handed the grounded sources for a query and works through them in some
order $\pi$. She begins knowing nothing. Each time she finishes a source
she registers whichever of its blocks are new to her; blocks she has
already seen elsewhere add nothing. Her satisfaction with what she has
learned so far is $u(x)$, where $x\in[0,1]$ is the fraction of the
answer's blocks she has covered, with $u(0)=0$ and $u(1)=1$. We take
\begin{equation}
u(x)\;=\;x^{\alpha},\qquad \alpha\ge 1,
\label{eq:utility}
\end{equation}
a one-parameter family that spans the range of views a buyer might hold
about an incomplete answer. At $\alpha=1$ utility is linear and every
additional fact is worth the same. As $\alpha$ rises the early facts count
for less and the last ones for more, and in the limit $\alpha\to\infty$ we
have $u(x)\to0$ for every $x<1$ while $u(1)=1$: the buyer has no value for the answer before consuming all its blocks. The exponent therefore can be interpreted as controlling \emph{how much the buyer values complete answers} and it is a parameter of our model.

Each of the aforementioned attribution rules can be interpreted as assigning a publisher credit for the increment in the reader's satisfaction that it
caused. Writing $c_p(\pi)$ for the number of blocks already covered when publisher
$p$ is reached under order $\pi$,\footnote{
One might read the sources in a random order over \emph{articles} rather
than over \emph{publishers}, which is not the same thing when a publisher
contributes several articles. In our data the two differ by 0.14\% of
total mass, so we use the publisher-level version throughout and report
the article-level version in Appendix~\ref{app:robustness}.
} and $a_p(\pi)$ for the number of blocks
$p$ is the first to supply,
\begin{equation}
\credit_p \;=\; \E_{\pi}\!\left[\,
  u\!\left(\frac{c_p(\pi)+a_p(\pi)}{n}\right)
  -u\!\left(\frac{c_p(\pi)}{n}\right)\right].
\label{eq:unified}
\end{equation}
Equation~\eqref{eq:unified} describes the mechanism. It has exactly two
free parameters (or ``dials"): the distribution over reading orders $\pi$, and the
exponent $\alpha$. These two parameters determine the credit attribution rule. %
Figure~\ref{fig:dials}
illustrates the construction.

\begin{figure}[tbp]\centering
  \figdials
  \caption{\textbf{Credit Allocation Rules.} Methods for dividing credit among publishers who contributed a duplicated block, expressed as a function of reading order and utility curvature. Three of
  the corners are the rules described in Section~\ref{sec:rules}; the fourth is implied by the framework.}\label{fig:dials}
\end{figure}

\subsection{Varying the order parameter: distribution over $\pi$}
\label{sec:orderdial}

First fix the exponent $\alpha=1$ so that utility is linear in the fraction of blocks read. If the reading order $\pi$ is chosen uniformly randomly over all permutations, a block is credited to whichever of its
publishers the reader happens to reach first, and by symmetry that is each
of them with probability $1/|P_b|$. In expectation every publisher in $P_b$ receives
$1/(n|P_b|)$---which is the \EQUAL{} rule exactly.

\begin{proposition}[Random order gives the equal split]
\label{prop:random}
With $u$ linear and $\pi$ uniform over orderings of the covering
publishers, $\credit_p=\credit^{\EQUAL{}}_p$ for every $p$.
\end{proposition}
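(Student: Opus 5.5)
With $u(x)=x$ the increment in \eqref{eq:unified} is $u\bigl((c_p+a_p)/n\bigr)-u(c_p/n)=a_p(\pi)/n$, so the quantity $c_p(\pi)$ drops out. This leaves
\[
\credit_p=\frac{1}{n}\,\E_\pi[a_p(\pi)].
\]
The plan is to decompose $a_p(\pi)$ block by block and compute each expectation by a symmetry argument.

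First, write
\[
a_p(\pi)=\sum_{b\in\blocks}\mathbbm{1}\{p\in P_b \text{ and } p \text{ precedes every } q\in P_b\setminus\{p\} \text{ in } \pi\}.
\]
This holds because $p$ is the first to supply $b$ exactly when $p$ carries $b$ and no other carrier of $b$ was read earlier. Once the reader has registered $b$ it adds nothing later, so each block is counted for exactly one publisher. By linearity of expectation,
\[
\E_\pi[a_p(\pi)]=\sum_{b\in\blocks}\mathbbm{1}\{p\in P_b\}\,\Pr_\pi\bigl[p \text{ is first among } P_b\bigr].
\]

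Second, I would show that $\Pr_\pi[p \text{ is first among } P_b]=1/|P_b|$ for $p\in P_b$. The restriction of a uniform permutation to the subset $P_b$ induces a uniform relative order on $P_b$. Concretely, for $q\in P_b$, composing $\pi$ with the transposition of $p$ and $q$ is a bijection on orderings that maps the event "$p$ first in $P_b$" onto "$q$ first in $P_b$". These $|P_b|$ events are therefore equiprobable, disjoint, and exhaust all outcomes.

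The statement says "orderings of the covering publishers". If $\pi$ ranges over all of $\pubs$ rather than $\bigcup_b P_b$, the same argument applies, since publishers outside $P_b$ do not affect the relative order within $P_b$. Substituting gives
\[
\credit_p=\frac{1}{n}\sum_b \frac{\mathbbm{1}\{p\in P_b\}}{|P_b|}=\credit^{\EQUAL{}}_p.
\]

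There is no real obstacle here. The only point needing care is the bookkeeping claim that the increment equals $a_p/n$ with blocks attributed without double counting, together with the exchangeability argument for the induced order on $P_b$. Ties do not arise because $\pi$ is a strict order.
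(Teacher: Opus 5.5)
Your proof is correct and matches the paper's own argument: linearity reduces the increment to the $a_p(\pi)/n$ blocks that $p$ is first to supply, and each block is credited to the earliest member of $P_b$ in $\pi$. Exchangeability makes that member each publisher in $P_b$ with probability $1/|P_b|$, and linearity of expectation sums over blocks; your transposition bijection and your remark about $\pi$ ranging over all of $\pubs$ only spell out what the paper compresses into the word ``exchangeability.''
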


\noindent
\PROV{} is the \emph{same} mechanism with the order dial turned to
history: $\pi$ is chosen to order articles by the time at which they are published. Recall that under linear $u$ the credit for block $b$ goes entirely to whichever
member of $P_b$ the reader reaches first. When $\pi$ is chronological
that member is the earliest publisher of $b$ by construction, so
\eqref{eq:unified} pays $1/n$ to the originator of each block and nothing
to its later publishers, yielding the  $\PROV{}$ rule.

\begin{proposition}[Chronological order gives the provenance rule]
\label{prop:chrono}
With $u$ linear and $\pi$ fixed to be the chronological order of
publication, $\credit_p=\credit^{\PROV{}}_p$ for every $p$.
\end{proposition}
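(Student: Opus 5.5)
The plan is to exploit linearity: with $\alpha=1$ the increment in \eqref{eq:unified} is simply $a_p(\pi)/n$, so $\credit_p=\frac1n\E_\pi[a_p(\pi)]$ and the credit decomposes block by block as $\credit_p=\frac1n\sum_{b\in\blocks}\Pr_\pi[\,p \text{ is the first member of } P_b \text{ reached under } \pi\,]$. The term $c_p(\pi)$ drops out entirely, so no interaction between blocks survives. It then suffices to show, for each fixed block $b$, that this probability equals $\mathbbm{1}\{p\in\arg\min_{q\in P_b}t_{q,b}\}/|\arg\min_{q\in P_b}t_{q,b}|$. Summing over $b$ and dividing by $n$ then reproduces $\credit^{\PROV{}}_p$ term by term.

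Before carrying this out I will make ``chronological order'' precise, and this is the step I expect to be the real obstacle. A publisher may carry different blocks in different articles at different times. A single order over \emph{publishers} can therefore be chronological for $b$ but not for $b'$: $p$ may precede $q$ on $b$ while $q$ precedes $p$ on $b'$.

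I would resolve this by letting the reader traverse \emph{articles} (publisher--article pairs) in order of publication time, as the footnote to \eqref{eq:unified} permits. A publisher's credit is then the sum of the increments its articles generate, which under linear $u$ is again $\frac1n$ times the number of blocks it supplies first. Under this order, the first article containing $b$ that the reader reaches has, by construction, the minimal timestamp $t_{q,b}$ over $q\in P_b$.

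Ties are the second issue, since \PROV{} splits them evenly. I would define the chronological $\pi$ as sorting by timestamp with exact ties broken uniformly at random, so that the expectation in \eqref{eq:unified} is over this tie-breaking only.

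With these conventions the per-block claim follows in two cases. If $p\notin\arg\min_{q}t_{q,b}$, some earlier publisher's article is always read before any of $p$'s articles carrying $b$, so $p$ is never first on $b$ and the probability is $0$. If $p$ is among the tied earliest publishers, then by symmetry of the uniform tie-break each member of $\arg\min$ is first with equal probability, namely $1/|\arg\min_q t_{q,b}|$. This mirrors the symmetry argument given for Proposition~\ref{prop:random}, restricted to the tied set.

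One subtlety needs a remark: several of a single publisher's articles sharing the minimal timestamp must count as one tied publisher, not as several. I would handle this by breaking ties at the publisher level, or by letting $t_{p,b}$ be the earliest time at which $p$ carried $b$. Summing the per-block probabilities over $b$ completes the identification $\credit_p=\credit^{\PROV{}}_p$.
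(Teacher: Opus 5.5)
Your core argument matches the paper's. Under linear $u$ the increment is $a_p(\pi)/n$, so each block's $1/n$ goes to whichever member of $P_b$ is reached first. A chronological order makes that the earliest publisher, and summing over blocks gives $\credit^{\PROV{}}_p$.

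What you add is care about what ``chronological'' means, and it is not cosmetic. The paper fixes a block, orders publishers by their time on that block, and writes $\credit_p=\frac1n\sum_b\mathbbm{1}\{f_b=p\}$. That tacitly assumes two things: a unique first publisher $f_b$, and a single publisher order that is chronological for every block at once. The second assumption can fail. If $P_b=P_{b'}=\{p,q\}$ with $p$ earlier on $b$ and $q$ earlier on $b'$, any publisher-level order gives both blocks to the same publisher, while \PROV{} gives one to each. Your article-level order, which is how Section~\ref{sec:orderdial} actually describes $\pi$, removes this problem. Your uniform tie-break handles the first assumption.

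Strictly, the tie-break means you prove the proposition with $\pi$ chronological up to a random tie-break, not a point mass. But that is the version that is true. A deterministic $\pi$ hands a tied block wholly to one publisher and cannot reproduce the even split in \PROV{}, and ties are common in the paper's own data (date-only timestamps, the calendar-day convention). The paper's route buys brevity; yours buys validity in exactly these cases.

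One small correction concerns a publisher with several articles at the minimal time. Of your two remedies, only publisher-level tie-breaking does the job. Taking $t_{p,b}$ to be $p$'s earliest time fixes the \PROV{} side. But a uniform lottery over tied \emph{articles} would still weight $p$ by its number of tied articles. So you need that definition together with a publisher-level tie-break, e.g.\ draw a uniform order $\sigma$ of publishers and sort articles by time, then by $\sigma$.
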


\noindent
Thus the two
rules differ only in their distribution over $\pi$.
One can interpret a uniformly random order $\pi$ as a veil of ignorance: it says the market should not reward
a publisher for reporting on a fact first, only for being able able to supply the fact at all.
Chronological order says the opposite:  being first is what carries value. Intermediate rules, taking distributions over $\pi$ that favor provence without insisting on it for example, provide an interpolation between these extremes.  The choice of distribution is a normative choice about whether original reporting
or availability is what the marketplace exists to fund. In our empirical study, we find that choosing one extreme or the other can shift aggregate revenue share of a publisher by several percentage points. See Section~\ref{sec:results}.

\subsection{Varying the utility model: parameter $\alpha$}
\label{sec:utilitydial}

Suppose that the order $\pi$ is chosen uniformly at random and consider varying the parameter $\alpha$. The exponent
$\alpha$ of \eqref{eq:utility} runs from $\alpha=1$, where every
additional fact is worth the same to the buyer, to the limit in which $u$
tends to $\mathbbm{1}\{x=1\}$ and only a \emph{complete} answer is worth
anything. At that limit a publisher's marginal contribution is one
precisely when it is the arrival that completes the set, and zero
otherwise.  Coupling $\sigma$ from the definition of $\credit^{\COMPL{}}_p$  and $\pi$ and noting that $v_{all}(S)=u(\{b\vert P_b\cap S\not=\emptyset \})$ thus implies the following proposition.

\begin{proposition}[All-or-nothing utility gives the completer rule]
\label{prop:complete}
With $\pi$ uniform over orderings, $\credit_p\to\credit^{\COMPL{}}_p$ for
every $p$ as $\alpha\to\infty$.
\end{proposition}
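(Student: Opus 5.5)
The plan is to prove the limit one permutation at a time, and then pass the limit through the expectation. The expectation is over finitely many orderings of the covering publishers, so it is a finite average. Fix $p$ and an ordering $\pi$, and write $c=c_p(\pi)$ and $a=a_p(\pi)$. The integrand in \eqref{eq:unified} is then
\[
g_\alpha(\pi)=\Bigl(\tfrac{c+a}{n}\Bigr)^{\alpha}-\Bigl(\tfrac{c}{n}\Bigr)^{\alpha},
\]
with $0\le c\le c+a\le n$. We couple $\pi$ with the permutation $\sigma$ in the definition of $\credit^{\COMPL{}}_p$ by taking $\sigma=\pi$, so that $S_{\sigma(p)}$ is exactly the set of publishers read before $p$. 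Both are uniform over orderings, so the two expectations are taken under the same distribution.

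The key identity is that the number of blocks covered by a coalition $S$ is $|\{b: P_b\cap S\neq\emptyset\}|$. In particular $c=|\{b:P_b\cap S_{\sigma(p)}\neq\emptyset\}|$, and $c+a$ is the same count for $S_{\sigma(p)}\cup\{p\}$, because $a$ counts precisely the blocks that $p$ supplies and that nobody earlier supplied. This gives $v_{all}(S)=\mathbbm{1}\{\text{count}(S)=n\}$, which is the limit of $(\text{count}(S)/n)^\alpha$ as $\alpha\to\infty$. Since $x^\alpha\to\mathbbm{1}\{x=1\}$ pointwise on $[0,1]$, it follows that for each fixed $\pi$,
\[
g_\alpha(\pi)\to \mathbbm{1}\{c+a=n\}-\mathbbm{1}\{c=n\}
= v_{all}(S_{\sigma(p)}\cup\{p\})-v_{all}(S_{\sigma(p)}).
\]
A short case check confirms the right-hand side. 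It equals $1$ exactly when $p$ completes the set, and $0$ when the set was already complete before $p$ or is still incomplete after $p$.

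To conclude, note that $0\le g_\alpha\le1$ and that the average is over a finite set of orderings. Summing the pointwise limits therefore gives $\credit_p\to E_\sigma[v_{all}(S_{\sigma(p)}\cup\{p\})-v_{all}(S_{\sigma(p)})]=\credit^{\COMPL{}}_p$. Finiteness means no dominated-convergence argument is needed.

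The only step needing care is the bookkeeping. First, the permutation in \eqref{eq:unified} ranges over covering publishers, whereas $\sigma$ ranges over all of $\pubs$. Publishers outside $\bigcup_b P_b$ cover nothing, so inserting them into the order changes neither $c$ nor $a$ for any covering publisher, and the two uniform distributions induce the same law on each $S_{\sigma(p)}\cap\bigcup_b P_b$. For a non-covering $p$, both sides are zero. Second, the remaining edge cases are immediate. If $c=n$, both terms equal $1$ for every $\alpha$, so the difference is $0$. If $c+a<n$, both terms tend to $0$ geometrically.
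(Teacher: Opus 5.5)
Your proposal is correct and follows the paper's proof. For each fixed ordering the increment converges pointwise to $\mathbbm{1}\{c_p(\pi)+a_p(\pi)=n\}-\mathbbm{1}\{c_p(\pi)=n\}$, which is $p$'s marginal contribution to $v_{\mathrm{all}}$ at its predecessor set, and the limit passes through the finite average over uniform orderings to give $\credit^{\COMPL{}}_p$; your extra bookkeeping (that $c+a$ is the coverage count of $S_{\sigma(p)}\cup\{p\}$, and that adding non-covering publishers to the order leaves the relevant predecessor sets unchanged in law) spells out a step the paper leaves implicit.
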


\noindent
\EQUAL{} and \COMPL{} are therefore the two ends of one continuum rather
than alternatives, and intermediate values of $\alpha$ trace the path
between them. Figure~\ref{fig:alpha}, in Section~\ref{sec:results},
follows that path through our data. The exponent is the buyer's answer to
a single question---\emph{how much is a partial answer worth?}---and the
two rules represent two extreme points.

The space of parameters suggests four corner mechanisms. We have already covered \EQUAL{}, \PROV{} and \COMPL{}. The fourth rule, chronological order with an all-or-nothing buyer, pays the
publisher whose arrival completes the answer, where arrivals are ordered
by publication time. It therefore rewards whoever was slowest among those
the answer could not do without. We report its payouts on our empirical data in Section~\ref{sec:results}.

\subsection{Connection to Shapley value}
\label{sec:shapley}

In a cooperative game, a characteristic function describes the value of any coalition, or subset, of players.  The Shapley value of a cooperative game assigns credit to individuals by considering their marginal contribution to the characteristic function over a uniformly-random subset of players.  It is the unique rule that satisfies {\em efficiency}  (the sum of players' credits equals the total value of the grand coalition), {\em symmetry} (if two players are interchangeable in all coalitions, they receive the same credit), {\em additivity} (for games described by characteristic functions $v$ and $w$, the Shapley values of game that is a linear combination of $v$ and $w$ is the linear combination of the Shapley values of $v$ and $w$), and the {\em null player property} (if a player contributes nothing to any coalition, they get no credit).  As such, it is a popular solution concept in practice, as discussed in Section~\ref{sec:literature}.  

In this section, we show the rules discussed previously are in fact Shapley values of appropriately-defined characteristic functions.\footnote{Note this is non-trivial because Shapley value requires the marginal contribution to be computed over a uniformly-random subset or, equivalently, a uniformly-random permutation.}  Doing so lets us understand the normative consequences of one choice over another by examining the characteristic function.
Recall that \COMPL{} was defined as a
Shapley value in Section~\ref{sec:rules} and Proposition~\ref{prop:complete}
shows the reading model recovers it. The other two rules were derived
from the reading model directly. We now note that they can likewise be interpreted as Shapley values under certain game variations.

Write $v_{\mathrm{lin}}(S)=\frac{1}{n}\,|\{b: S\cap P_b\neq\emptyset\}|$
for the share of the answer's blocks a coalition $S$ can cover. Where
$v_{\mathrm{all}}$ asks whether $S$ can supply everything,
$v_{\mathrm{lin}}$ asks how much of it $S$ can supply.

\begin{proposition}[The equal split is a Shapley value]
\label{prop:shapley}
The \EQUAL{} rule is the Shapley value of $v_{\mathrm{lin}}$.
\end{proposition}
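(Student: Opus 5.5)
The plan is to use additivity of the Shapley value to reduce to a single block, and then apply symmetry together with the null-player property. First I write $v_{\mathrm{lin}}$ as a sum of simple games:
\[
v_{\mathrm{lin}}(S)=\frac{1}{n}\sum_{b\in\blocks} w_b(S),\qquad w_b(S)=\mathbbm{1}\{S\cap P_b\neq\emptyset\}.
\]
Because the Shapley value is linear in the characteristic function, the Shapley value of $v_{\mathrm{lin}}$ equals $\frac1n\sum_b \mathrm{Sh}_p(w_b)$. It therefore suffices to show that $\mathrm{Sh}_p(w_b)=\mathbbm{1}\{p\in P_b\}/|P_b|$ for every block $b$.

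Next I compute the Shapley value of each $w_b$, which is the unanimity-dual, or ``covering'', game on $P_b$.

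\begin{itemize}
\item Any $p\notin P_b$ never changes $w_b$ when it joins a coalition. It is a null player, so it receives $0$.
\item Any two members of $P_b$ are interchangeable in $w_b$, so by symmetry they receive equal credit.
\item The grand coalition has $w_b(\pubs)=1$ because $P_b$ is nonempty, so by efficiency the shares sum to $1$.
\end{itemize}

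Together these give exactly $1/|P_b|$ to each member of $P_b$. As a cross-check, I can do the direct permutation computation. Under a uniform $\sigma$, the marginal contribution of $p$ to $w_b$ is $1$ exactly when $p\in P_b$ and no other member of $P_b$ precedes $p$. That event has probability $1/|P_b|$. Summing over $b$ and multiplying by $1/n$ gives $\credit^{\EQUAL{}}_p$.

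An alternative route goes through Proposition~\ref{prop:random}. With linear $u$, the increment in \eqref{eq:unified} is $a_p(\pi)/n$. This equals $v_{\mathrm{lin}}(S_\pi(p)\cup\{p\})-v_{\mathrm{lin}}(S_\pi(p))$, because the blocks $p$ is first to supply are exactly the new blocks it covers. Taking expectations under a uniform $\pi$ gives the Shapley formula, and Proposition~\ref{prop:random} then yields \EQUAL{}.

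The only delicate point is the mismatch in player sets. The reading model orders only the covering publishers, while the Shapley value ranges over all of $\pubs$. Null players do not affect anyone else's marginal contributions, so inserting them into permutations leaves the distribution of the relative order of the relevant players uniform. This closes the gap.
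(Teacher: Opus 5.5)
Your proof is correct and follows the paper's own argument. Both write $v_{\mathrm{lin}}=\frac{1}{n}\sum_b v_b$ with $v_b(S)=\mathbbm{1}\{S\cap P_b\neq\emptyset\}$, pay zero to the null players outside $P_b$, use symmetry and efficiency to give each member of $P_b$ exactly $1/|P_b|$, and conclude by additivity; your direct permutation check, the route through Proposition~\ref{prop:random}, and your note that $P_b$ must be nonempty are valid additions but not needed.
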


\begin{proof}[Proof sketch]
Write $v_{\mathrm{lin}}=\frac{1}{n}\sum_{b}v_b$ where
$v_b(S)=\mathbbm{1}\{S\cap P_b\neq\emptyset\}$ is a simple game in which only the
members of $P_b$ matter. Any publisher outside $P_b$ is a null player in $v_b$ and receives
nothing; the members of $P_b$ are symmetric in $v_b$; efficiency then
forces each to receive $1/|P_b|$. The result then follows by additivity of the Shapley value. The full proof is in
Appendix~\ref{app:proofs}.
\end{proof}

\PROV{} is likewise a Shapley value, of a third game
which defines the value of a coalition according to what share of the answer the group \emph{broke}. Writing
$p_b$ for the publisher that reported block $b$ first,
\begin{equation}
v_{\mathrm{prov}}(S)\;=\;\frac{1}{n}\,
  \bigl|\{b\in\blocks:\, p_b\in S\}\bigr|,
\label{eq:vprov}
\end{equation}
so a group is credited with a block when one of its members was the first
to report it, and carrying a block somebody else broke is worth nothing.

The consequence is that a publisher which only ever reprints contributes
nothing to \emph{any} coalition, rather than merely to some of them.
Suppose three outlets carry a block and outlet $A$ filed first. Under
$v_{\mathrm{lin}}$ the singleton $\{B\}$ is worth $1/n$: with $A$ absent,
$B$ can still report the block to the reader. Under $v_{\mathrm{prov}}$ the
singleton $\{B\}$ is worth zero, because the block was not $B$'s to break,
and adding $B$ to any group leaves that group's value unchanged. So $B$
is a null player and $A$ is the only publisher whose presence or absence
moves the block at all. The Shapley value pays a null player nothing and
hands the block's whole value to the publisher that alone determines it,
which is the provenance rule.

\begin{proposition}[The provenance rule is a Shapley value]
\label{prop:provshapley}
The \PROV{} rule is the Shapley value of $v_{\mathrm{prov}}$.
\end{proposition}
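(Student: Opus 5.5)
The plan is to mirror the proof sketch of Proposition~\ref{prop:shapley}: decompose $v_{\mathrm{prov}}$ into a sum of per-block games, compute the Shapley value of each block game from the axioms alone, and recombine by additivity. Concretely, write
\[
v_{\mathrm{prov}}=\frac{1}{n}\sum_{b\in\blocks}w_b,\qquad w_b(S)=\mathbbm{1}\{p_b\in S\},
\]
which holds because $|\{b: p_b\in S\}|=\sum_b \mathbbm{1}\{p_b\in S\}$. By additivity (and homogeneity, which is part of the linearity of the Shapley value), the Shapley value of $v_{\mathrm{prov}}$ equals $\frac1n\sum_b \mathrm{Sh}(w_b)$. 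It therefore suffices to show that $\mathrm{Sh}_p(w_b)=\mathbbm{1}\{p=p_b\}$.

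For a fixed block $b$, take any publisher $q\neq p_b$. Adding $q$ to a coalition $S$ never changes whether $p_b\in S$, so $w_b(S\cup\{q\})=w_b(S)$ for all $S$. Hence $q$ is a null player in $w_b$ and receives $0$. Efficiency gives $\sum_p \mathrm{Sh}_p(w_b)=w_b(\pubs)=1$, so all of this value goes to $p_b$. As a direct check, $p_b$'s marginal contribution is $1$ in every permutation, since $w_b$ is a dictator game. Summing over $b$ with weight $1/n$ reproduces $\credit^{\PROV{}}_p$ whenever the earliest publisher of every block is unique.

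The only real obstacle is ties, because the \PROV{} rule splits a block evenly among $F_b=\arg\min_{q\in P_b}t_{q,b}$ while \eqref{eq:vprov} presumes a single $p_b$. I would handle this by reading ``$p_b\in S$'' in \eqref{eq:vprov} as $S\cap F_b\neq\emptyset$, so that $w_b(S)=\mathbbm{1}\{S\cap F_b\neq\emptyset\}$. This is exactly the game $v_b$ from Proposition~\ref{prop:shapley}, with $P_b$ replaced by $F_b$. The same argument then applies:
\begin{itemize}
\item players outside $F_b$ are null, since adding them never changes whether $S$ meets $F_b$;
\item any two members of $F_b$ are interchangeable in $w_b$, so symmetry gives them equal shares;
\item efficiency forces each member of $F_b$ to receive $1/|F_b|$.
\end{itemize}
This matches the tie-splitting term $\mathbbm{1}\{p\in F_b\}/|F_b|$ in $\credit^{\PROV{}}_p$. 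Additivity over blocks then completes the proof. When $|F_b|=1$ the tie case reduces to the unique-originator case, so a single argument covers both.
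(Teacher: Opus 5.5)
Your proof is correct and is essentially the paper's argument: you write $v_{\mathrm{prov}}=\frac{1}{n}\sum_b w_b$ as a sum of dictator games, use the null-player axiom and efficiency to hand each block to its originator, and recombine by additivity. Your tie-handling via $w_b(S)=\mathbbm{1}\{S\cap F_b\neq\emptyset\}$ (null players outside $F_b$, symmetry inside) adds something useful: the paper's proof writes $\mathbbm{1}\{f_b=p\}$ and so silently assumes a unique first publisher, even though the \PROV{} rule as defined splits ties evenly and the paper's calendar-day timestamp convention creates many ties.
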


\noindent
We can interpret $v_{\mathbf{prov}}$ as saying a reprint is worth nothing to the buyer even when the
original is unavailable. Taken literally that is false, and it is exactly
the same commitment the chronological order makes. There the later
publisher earns nothing because it is always reached second, here because
it never counted. This is two ways of saying the same thing: either the
buyer values coverage and reads in the order the news broke, or the buyer
reads in no particular order but values only first publication.
Timestamps enter the mechanism at one point, and whether one puts them
into the reading order or into the game is immaterial; the
credit allocations are the same either way.

\section{Empirical Study}
\label{sec:empirical}

\subsection{The setting and data}
\label{sec:data}

Our data come from a deployed prototype of a block-level marketplace.
Our domain is news articles covering the Iran/US war, available online on June 28, 2026.  A ceasefire had been
in effect at the time, but Iran attacked container ships in the strait of Hormuz during the weekend
and the US retaliated. We selected this episode because it is fast-moving, heavily covered, and
generated a large volume of near-simultaneous reporting by outlets of
very different types, so both original reporting and copying are present
in quantity.

To construct our corpus, we sampled \NQueries{} user queries posed to a major search engine about this news scenario over a 1-hour period on June 28, 2026. Our corpus consists of all articles, records, or URLs returned from grounding in response to these queries.
Of the \NQueries{} user queries, \NQueriesUsed{}
produced an answer citing at least one block; \NGroundings{} grounding
records over \NUrls{} distinct URLs; and \NPublishers{} publishers, of
whom \NPremium{} (\NPremiumPct\%) are classified as premium. Articles
decompose into \NBlockRows{} block instances which cluster into
\NClusters{} distinct blocks. The mean answer cites \MeanCited{} blocks.
Figure~\ref{fig:anatomy} illustrates the resulting distribution of cluster sizes across news and commentary blocks for 
all extracted claims versus those cited in generated answers, for the set of articles crawled.

\begin{figure}[tbp]
\centering
\includegraphics[width=\textwidth]{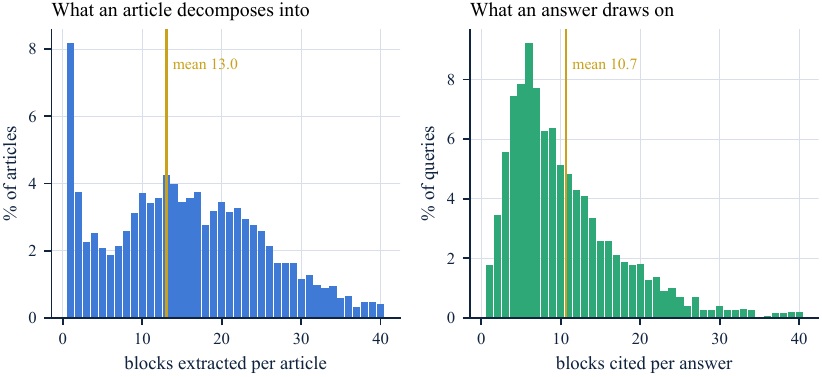}\caption{\textbf{Anatomy of the index.} Left: blocks extracted per article. Right: blocks cited per answer. Both truncated at 40 for legibility.}\label{fig:anatomy}
\end{figure}

\subsubsection{Premium classification}

Throughout we report the share of value accruing to \emph{premium}
publishers.  This is a designation intended to capture established outlets with substantial original
newsgathering operations, in contrast to aggregators, syndicators, and
low-cost republishers.

The premium coding is a hand-curated tier assigned to each publisher domain, with
three levels. \NPremiumUS{} domains are coded \texttt{premium-us} and
\NPremiumIntl{} \texttt{premium-intl}; the remaining \NStandard{} are
\texttt{standard}. The two premium levels are pooled throughout and reported as \emph{premium}, giving
the \NPremium{} publishers (\NPremiumPct\%) that the results tables refer
to. The list is dominated by national newspapers, wire services and
broadcasters.  \NPremiumCountryUS{} of the premium domains are American,
\NPremiumCountryFR{} French, \NPremiumCountryIN{} Indian and
\NPremiumCountryUK{} British.  Importantly, the tier was assigned before any
attribution rule was computed, so it was not fitted to the
results.

The premium classification is not a normative judgment about quality. It is a
proxy for the cost structure the marketplace is meant to affect: those publishers denoted ``premium'' are those that presumably pay reporters to generate content.
As a robustness check, we report
in Appendix~\ref{app:robustness} how the results change when the two
premium levels are separated and when the premium tier is replaced by the
fully mechanical \emph{type} coding, which identifies wire services, national
news, and broadcast outlets.

Two caveats about the set of publishers used in our data. First, publishers are
identified by domain, so a small number of outlets appear more than once
under country editions: \NDupPublishers{} common names are carried by
two or more domains. This splits their credit and therefore mildly
\emph{understates} concentration under every credit attribution rule. Second,
\NTypeOther{} of the \NPublishers{} domains fall in the residual
\emph{other} type under our type coding, which is a reminder that the long tail of the crawl over news articles is primarily covered by small publishers that are not indexed as major news organizations.

\subsubsection{Publication timestamps}
\label{sec:timestamps}

The \PROV{} rule requires knowing the order of publication.
Of the \NUrls{} grounded URLs, \AnyTimeAll\% carry some publication time.
But a large fraction carry only a \emph{date}, which the crawler records
as midnight. \NMidnight{} URLs have a recorded time but it sits at exactly 00:00:00, suggesting a similar truncation, and a further
\NNoTime{} have no time at all. The share carrying an exact,
minute-resolution time is \ExactPrem\% for premium publishers and
\ExactNonPrem\% for the rest. 

We emphasize that this the asymmetry runs \emph{against}
premium outlets in the \PROV{} rule, since they are the group whose exact filing times are least
often recorded. Furthermore, because
\PROV{} is decided by an ordering, and because a third of the relevant
timestamps are imputed, our results for \PROV{} are potentially sensitive to the imputation convention.
We therefore consider and compare four conventions in our reported results: times as stored; date-only records
imputed to noon rather than midnight; all times truncated to the calendar
day so that same-day publications tie; and exact times only, with blocks
lacking any exactly-timed publisher split equally.

\subsection{Empirical Results}
\label{sec:results}

\subsubsection{Prevalence of duplication among blocks}
\label{sec:duplication}

Among cited blocks, evaluated against the global index,
the mean number of carrying publishers is \MeanCovGlobal{} and only
\SingletonGlobal\% have a single publisher. 
Duplication is also unevenly distributed across the two block kinds. News
blocks are singletons \SingletonNews\% of the time, commentary blocks
\SingletonComment\% of the time, meaning that facts get copied more often than the
commentary written about them. Figure~\ref{fig:copygap} reports the
full distributions. 

Due to the prevalence of duplication, we find that deduplicating facts substantially reduced context length. 
Across all \NQueries{} queries in our evaluation, generating answers from block trees 
consumed \TokBoxPrompt{} prompt tokens compared to \TokTextPrompt{} for the raw-text baseline---a \TokPromptCut\% 
reduction in prompt overhead. Including generated completions, total token consumption fell by 
\TokTotalCut\% (Figure~\ref{fig:tokens}).

\begin{figure}[tbp]
\centering
\includegraphics[width=\textwidth]{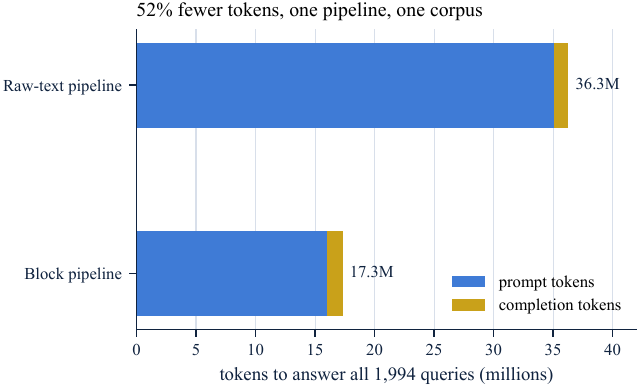}\caption{\textbf{The answering cost ($C_2$).} Prompt and completion tokens per query, comparing deduplicated block trees against raw article text across \NQueries{} queries.}\label{fig:tokens}
\end{figure}

\begin{figure}[tbp]\centering
  \includegraphics[width=\textwidth]{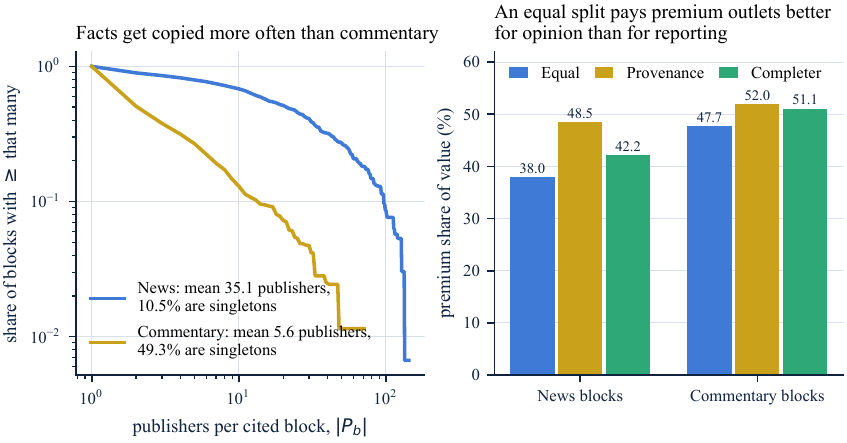}
  \caption{\textbf{The duplication gap.} Left: survival function of publishers per
  cited block in the global index, news against commentary. Right: the same
  gap in payment terms---an equal split pays premium outlets nearly ten
  points more for opinion than for reporting.}\label{fig:copygap}
\end{figure}

\subsubsection{Provenance favors premium publishers}
\label{sec:headline}

Table~\ref{tab:headline} reports the premium share of value under the \EQUAL{} and \PROV{} rules,

computed against the global index,
under each of the four timestamp conventions of
Section~\ref{sec:timestamps}.

\begin{table}[t]
\centering
\begin{threeparttable}
\caption{Premium share of value, global index, by timestamp convention.}
\label{tab:headline}
\begin{tabular}{lcc}
\toprule
Timestamp convention & \EQUAL{} & \PROV{}\\
\midrule
(no timestamps used)          & \EqGlobal\% & ---\\
Times as stored               & --- & \ProvAsis\%\\
Date-only imputed to noon     & --- & \ProvNoon\%\\
Truncated to calendar day     & --- & \ProvDayties\%\\
Exactly-timed publishers only  & --- & \ProvExact\%\\
\bottomrule
\end{tabular}
\begin{tablenotes}\footnotesize
\item \NQueriesUsed{} queries. Ties are split equally under every
convention. \EQUAL{} does not use timestamps and so is invariant to the
convention.
\end{tablenotes}
\end{threeparttable}
\end{table}

Rewarding first publication moves between five and eight percentage
points of total value toward premium publishers. The sign and
magnitude are roughly constant across timestamp conventions. The mechanism
is priority: premium outlets are disproportionately the ones who
report a fact first and are then carried by others, so a rule that prices
availability and a rule that prices priority reach materially
different answers.

\subsubsection{The impact of global indexing}
\label{sec:scoperesults}

Table~\ref{tab:scope} recomputes the same quantities as Table~\ref{tab:headline} but using the local index instead of the global index.

\begin{table}[t]
\centering
\begin{threeparttable}
\caption{The global and local index compared.}
\label{tab:scope}
\begin{tabular}{lcc}
\toprule
 & Local index & Global index\\
\midrule
Mean publishers per cited block & \MeanCovLocal{} & \MeanCovGlobal{}\\
Single-publisher blocks         & \SingletonLocal\% & \SingletonGlobal\%\\[0.3em]
\EQUAL{}, premium share         & \EqLocal\% & \EqGlobal\%\\
\PROV{}, premium share          & \ProvLocalRange\% & \ProvRange\%\\
\bottomrule
\end{tabular}
\begin{tablenotes}\footnotesize
\item \PROV{} ranges are across the four timestamp conventions of
Table~\ref{tab:headline}.
\end{tablenotes}
\end{threeparttable}
\end{table}

Inside the local index, \SingletonLocal\% of cited blocks have exactly one
publisher and mean coverage is \MeanCovLocal{}. 
Since duplication is rare, all rules arrive at approximately the same outcome. As a result,
every rule and every timestamp convention lands
within a single percentage point of every other.
Figure~\ref{fig:scope} shows the
two coverage distributions side by side.

\begin{figure}[tbp]\centering
  \includegraphics[width=\textwidth]{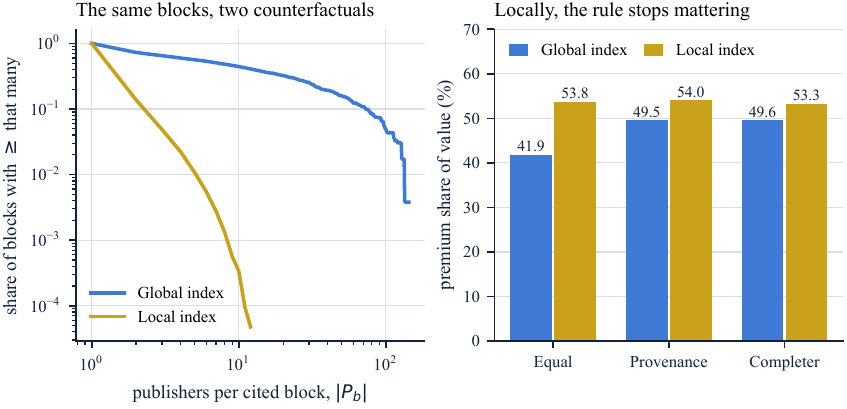}
  \caption{\textbf{Global against local index.} Left: coverage distributions
  under the two scopes. Right: the premium share under each rule. Locally,
  the rules collapse onto each other---there is nothing left for a payment
  rule to disagree about.}\label{fig:scope}
\end{figure}

We regard this as an important empirical result.
At the scope of the local index, \textbf{no attribution policy is expressible at all}.
Since an inference-time method can only ever observe the retrieved
set, different attribution strategies evaluated at inference time against the local index will result in very similar payouts to publishers.

\subsubsection{Comparing the Equal and Completer Rules}
\label{sec:alpharesults}

Next we compare the \EQUAL{} and \COMPL{} rules.  Payments
are the expected marginal contributions as described in Section~\ref{sec:mechanism},
estimated by sampling $200$ publisher orderings per query.
The results are reported in Table~\ref{tab:alpha} and Figure~\ref{fig:alpha}.

\begin{table}[t]
\centering
\begin{threeparttable}
\caption{The utility dial: premium share of value, global index, random order.}
\label{tab:alpha}
\begin{tabular}{lcc}
\toprule
Buyer's utility & Premium share & Publisher-level distance\\
                &                & from \EQUAL{} (TV)\\
\midrule
$\alpha=1$ \ (\EQUAL{}, closed form) & \EqGlobal\%  & ---\\
$\alpha=1$ \ (sampled)               & \AlphaOne\%  & \McNoise{} pp\\
$\alpha=2$                           & \AlphaTwo\%  & \\
$\alpha=4$                           & \AlphaFour\% & \\
$\alpha=8$                           & \AlphaEight\% & \\
$\alpha\to\infty$ \ (\COMPL{})       & \ComplGlobal\% & \TVEqComplGlobal{} pp\\
\bottomrule
\end{tabular}
\begin{tablenotes}\footnotesize
\item Distances are total variation between normalized publisher payout
vectors, computed over all publishers, so a distance can be large even
where the premium aggregate agrees: individual errors cancel within the
group. The second row is the Monte Carlo estimate of a quantity the
first row computes exactly; the gap between them is sampling noise and
bounds the precision of the rest of the column.
Appendix~\ref{app:converge} reports the full convergence, which follows
the textbook $m^{-1/2}$ rate.
\end{tablenotes}
\end{threeparttable}
\end{table}

Premium share rises monotonically in $\alpha$,
from \EqGlobal\% to \ComplGlobal\%. Most of the movement happens
early: going from $\alpha=1$ to $\alpha=2$ buys more than half the total
change. Curvature in the buyer's utility is a preference for
completeness, completeness is delivered by whoever holds the blocks
nobody else has, and premium publishers hold disproportionately many of
those. %

It is worth noting that both \COMPL{} and \PROV{} increase payments to premium publishers relative to \EQUAL{}, and by nearly the same amount in aggregate.
\COMPL{} pays premium publishers \ComplGlobal\% and \PROV{} pays them
\ProvAsis\%, a difference of less than a tenth of a point. These are
economically distinct rules---one rewards being irreplaceable, the other
rewards being early---and they agree on aggregate share to premium publishers. They do not,
however, agree on how payoffs are distributed among individual publishers. The total variation between their
publisher-level payout vectors is \TVProvComplGlobal{} percentage points,
larger than the distance from either of them to \EQUAL{}. 

\begin{figure}[tbp]\centering
  \includegraphics[width=0.85\textwidth]{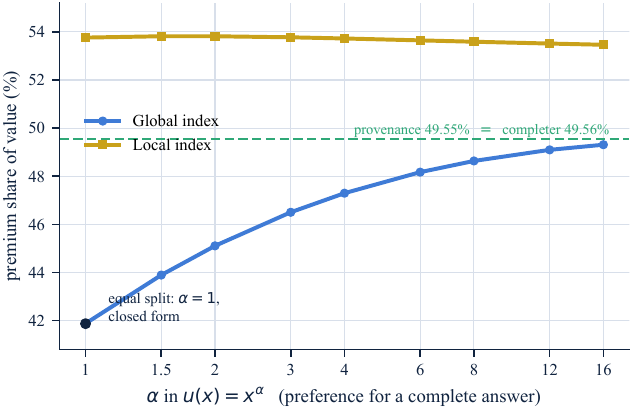}
  \caption{\textbf{The utility dial.} Premium share of value against $\alpha$
  in $u(x)=x^{\alpha}$, random order, under both index scopes. Provenance
  and the completer land within a hundredth of a point of one another and
  are drawn as a single reference line---from opposite economic
  directions.}\label{fig:alpha}
\end{figure}

\subsubsection{The fourth corner, and the two random orders}

Two further outcomes are worth discussing. Combining
the chronological order with the all-or-nothing utility gives the fourth
corner---pay whoever closes the last gap, in historical order---which
pays premium publishers \FourthGlobal\%, inside the range spanned by the
three named rules.

Separately, Section~\ref{sec:orderdial} distinguished two ways of
randomizing: over \emph{articles} and over \emph{publishers}. We define the \EQUAL{} rule as randomizing over publishers. Empirically the distinction is
immaterial in the local index, where the two differ by
\TVArtPubLocal{} percentage points, and not immaterial in the global
index, where they differ by \TVArtPubGlobal{} points and the premium
share moves from \EqGlobal\% to \RandArtGlobal\%. Randomizing over
articles gives extra weight to publishers who filed many stories carrying
the same fact, which is a form of volume subsidy.
This suggests that article-level randomization
could implicitly reward output and repetition rather than content.

\subsubsection{The news and commentary split}

The premium share differs sharply by block kind, and the difference
tracks the duplication gap of Section~\ref{sec:duplication}.

\begin{table}[t]
\centering
\begin{threeparttable}
\caption{Premium share of value by block kind, global index.}
\label{tab:kinds}
\begin{tabular}{lcc}
\toprule
 & News & Commentary\\
\midrule
Cited blocks                 & \NBlocksNews{}   & \NBlocksComment{}\\
Mean publishers per block    & \MeanCovNews{}   & \MeanCovComment{}\\
Single-publisher blocks      & \SingletonNews\% & \SingletonComment\%\\[0.3em]
\EQUAL{}, premium share      & \EqNews\%    & \EqComment\%\\
\PROV{}, premium share       & \ProvNews\%  & \ProvComment\%\\
\COMPL{}, premium share      & \ComplNews\% & \ComplComment\%\\
\bottomrule
\end{tabular}
\end{threeparttable}
\end{table}

A cited news block is carried by \MeanCovNews{} publishers on average; a
cited commentary block by \MeanCovComment{}. 
Under \EQUAL{}, premium publishers
receive \EqNews\% of the value of news blocks but \EqComment\% of the
value of commentary---an equal split pays them substantially better for
opinion than for reporting, because opinion is not copied and so is not
diluted. Moving to \PROV{} raises the news figure to \ProvNews\% while
raising the commentary figure only to \ProvComment\%, which closes most of
the gap.

This has implications for the impact of rule choice on the relative attribution of credit across content types.  
\textbf{An equal split does not merely fail to reward
original reporting; it downweights reporting relative to commentary,
because the property that makes reporting valuable---that others find it
worth copying---is the same property that dilutes its payment.} 

\subsubsection{Concentration}
\label{sec:concentration}

We next analyze how widely payments are dispersed among publishers. This is reported in Table~\ref{tab:conc}. 
The top fifty publishers take \TopFiftyEqual\% of the pool under
\EQUAL{} and \TopFiftyProv\% under \PROV{}. We use the Gini coefficient to measure concentration of payments, and find that it rises from
\GiniEqual{} to \GiniProv{}. Rewarding originality is therefore not a
transfer from the long tail to premium outlets in the aggregate sense
only, but it is also a substantial concentration of the pool. The aggregate premium
share understates this concentration because the movement is toward a small number of
large originators rather than toward the premium tier as a class.

\begin{table}[t]
\centering
\begin{threeparttable}
\caption{Concentration of the pool across publishers, global index.}
\label{tab:conc}
\begin{tabular}{lccc}
\toprule
 & \EQUAL{} & \PROV{} & \COMPL{}\\
\midrule
Top 10 publishers' share  & \TopTenEqual\%   & \TopTenProv\%   & \TopTenCompl\%\\
Top 50 publishers' share  & \TopFiftyEqual\% & \TopFiftyProv\% & \TopFiftyCompl\%\\
Gini coefficient          & \GiniEqual{}     & \GiniProv{}     & \GiniCompl{}\\
Publishers paid nothing   & \ZeroEqual\%     & \ZeroProv\%     & \ZeroCompl\%\\
\bottomrule
\end{tabular}
\begin{tablenotes}\footnotesize
\item Base is the \NPubBase{} publishers who cover at least one cited
block; publishers earning nothing under a rule remain in the denominator,
so the columns are comparable. Under the local index the three rules are
near-indistinguishable, with \EQUAL{} at a Gini of \GiniEqualLocal{} and
\PROV{} at \GiniProvLocal{}.
\end{tablenotes}
\end{threeparttable}
\end{table}

Under \EQUAL{}, every publisher who covers a cited block is paid
something. Under \PROV{}, \ZeroProv\% of them are paid nothing at all,
because being second is worth exactly zero. This is not a distributional
nuance; it is a participation question. A marketplace in which a third of
contributing publishers receive no payment is one that a third of
publishers have no reason to join, and the value of the index to the
buyer depends on their joining.

The completer rule concentrates for a different 
reason, as discussed in Section~\ref{sec:reservation}. Because
$v_{\mathrm{all}}$ is one only when every cited block is covered, a
publisher that uniquely supplies some cited block is a veto player, and
veto players collect \VetoGlobal\% of the pool under the global index.
Under the local index, where \SingletonLocal\% of blocks have a single
publisher, they collect \VetoLocal\%. The completer rule is thus close to
being a rule that pays only for exclusivity, and the local index turns it
into one almost exactly. This is our main reason for treating it as an extremal example within our framework rather than as a candidate policy.

Figure~\ref{fig:lorenz} reports the full distribution of value across
publishers under each rule, with the premium tier overlaid, and
Figure~\ref{fig:concentration} reports the veto-player share alongside
the top-$k$ curves.

\begin{figure}[tbp]\centering
  \includegraphics[width=\textwidth]{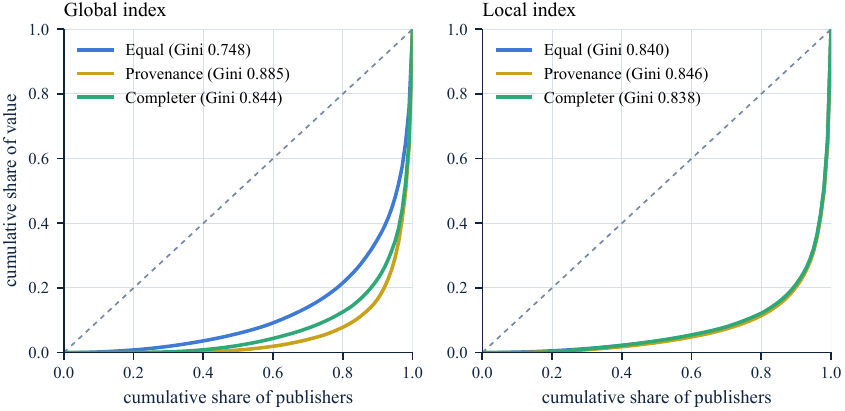}
  \caption{\textbf{Concentration.} Lorenz curves of value across publishers
  under each rule, on a fixed base of contributing publishers. Under the
  local index the three curves are indistinguishable.}\label{fig:lorenz}
\end{figure}

\begin{figure}[tbp]\centering
  \includegraphics[width=\textwidth]{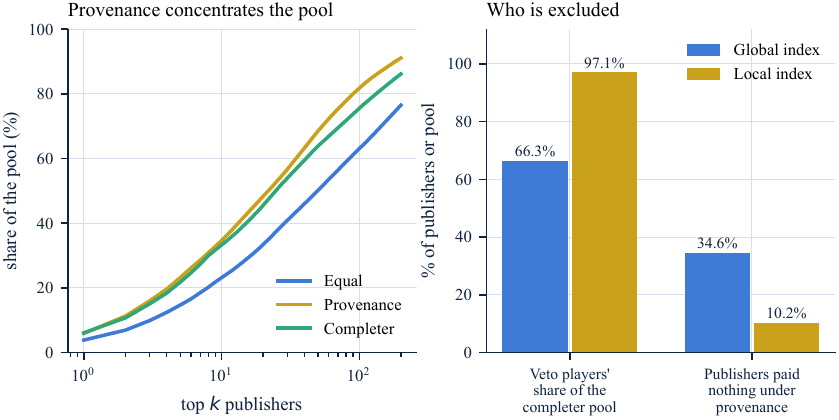}
  \caption{\textbf{Who gets paid.} Left: top-$k$ share of the pool under each
  rule, global index. Right: the share of the completer pool collected by
  veto players, and the share of contributing publishers paid nothing at all
  under provenance.}\label{fig:concentration}
\end{figure}

\subsubsection{Publisher-level results}
\label{sec:namedpubs}

So far, all results have been reported in aggregate. Table~\ref{tab:toppubs} reports
anonymized publisher-level results, since the rules are easier to judge
against individual payout patterns than against a Gini coefficient.

\begin{table}[t]
\centering
\begin{threeparttable}
\caption{The ten largest recipients under \EQUAL{}, and what the same
anonymized publisher groups are paid under the other two rules.}
\label{tab:toppubs}
\begin{tabular}{lccc}
\toprule
Publisher & \EQUAL{} & \PROV{} & \COMPL{}\\
\midrule
Premium publisher/broadcast \#1 & 3.88 \ (1)  & 5.35 \ (2)  & 6.12 \ (1)\\
Premium publisher/broadcast \#2 & 3.07 \ (2)  & 3.50 \ (5)  & 4.62 \ (2)\\
General-interest platform \#1    & 2.95 \ (3)  & 1.49 \ (14) & 4.33 \ (3)\\
Premium publisher/broadcast \#3 & 2.56 \ (4)  & 5.93 \ (1)  & 3.37 \ (4)\\
Standard publisher \#1          & 2.24 \ (5)  & 1.72 \ (12) & 3.31 \ (5)\\
Premium publisher/broadcast \#4 & 1.92 \ (6)  & 3.00 \ (6)  & 2.66 \ (7)\\
Premium publisher/broadcast \#5 & 1.88 \ (7)  & 2.43 \ (7)  & 2.63 \ (8)\\
Premium publisher/broadcast \#6 & 1.70 \ (8)  & 1.92 \ (9)  & 1.74 \ (9)\\
Premium publisher/broadcast \#7 & 1.63 \ (9)  & 1.18 \ (17) & 1.17 \ (16)\\
Premium publisher/broadcast \#8 & 1.32 \ (10) & 2.23 \ (8)  & 1.29 \ (14)\\
\addlinespace
\multicolumn{4}{l}{\emph{Outside this ten, but inside the top ten under one of the other rules}}\\
\addlinespace
Premium publisher/broadcast \#9  & \NYTEq{} \ (\NYTEqRank{}) & \NYTPv{} \ (\NYTPvRank{}) & 0.66 \ (30)\\
Specialized information site \#1 & \MilEq{} \ (\MilEqRank{}) & \MilPv{} \ (\MilPvRank{}) & 0.16 \ (99)\\
Reference platform \#1            & 0.92 \ (18) & 1.88 \ (10) & 1.03 \ (22)\\
Premium publisher/broadcast \#10 & \ReutersEq{} \ (11) & 1.31 \ (16) & \ReutersCm{} \ (6)\\
Premium publisher/broadcast \#11 & 1.10 \ (14) & 1.37 \ (15) & 1.64 \ (10)\\
\bottomrule
\end{tabular}
\begin{tablenotes}\footnotesize
\item Percent of the whole pool, global index, with \PROV{} under times as
stored. Rank among the \NPubBase{} contributing publishers in parentheses.
Labels are anonymized for this version of the paper.
\end{tablenotes}
\end{threeparttable}
\end{table}

The top of the equal split is held by outlets that publish prolifically.
Premium publisher/broadcast \#1, premium publisher/broadcast \#2,
general-interest platform \#1, premium publisher/broadcast \#3 and
standard publisher \#1 lead it. That is the rule working as designed: it
pays for having the fact, and these are the outlets that have the most facts.

Premium publisher/broadcast \#9 is the cleanest illustration of the distinction between \EQUAL{} and \PROV{}. 
It is \NYTEqRank{}th under \EQUAL{} with \NYTEq\% of the
pool and \NYTPvRank{}rd under \PROV{} with \NYTPv\%, a factor of
\NYTRatio{}. The mechanism is the dilution of
Section~\ref{sec:duplication}: facts that an outlet originates may be
carried by many other outlets, so an equal split divides each of them
twenty-odd ways. General-interest platform \#1 moves the other way, from
\YahooEq\% to \YahooPv\% and from third to fourteenth. Premium
publisher/broadcasts \#10 and \#11 rise under \COMPL{} without rising
under \PROV{}---\#10 from \ReutersEq\% to \ReutersCm\%---which is the
signature of holding material nobody else carried rather than of having
been early.

Another row from the table is worth highlighting. Specialized information site \#1 is
\MilEqRank{}th under \EQUAL{} with \MilEq\% of the pool and
\MilPvRank{}th under \PROV{} with \MilPv\%, which puts a single-purpose
site ahead of almost every newsroom in the corpus. We do not believe this specialized information site was the first to report these facts.
We believe it carries filing times that sort early,
as discussed in Section~\ref{sec:timestamps}.
This illustrates the potential fragility of a first-mover rule:
it is only as good as the timestamps it is decided by.

\subsection{Empirical Study Limitations}
\label{sec:limitations}

Our evidence comes from a single news episode served by a single
retrieval and generation stack. The Iran/US coverage was chosen
because heavy simultaneous reporting makes duplication measurable, but
that same feature may make our covering sets unusually large. We would
expect the qualitative results---that duplication is common and that the
index scope dominates the rule---to generalize.

Publication timestamps are substantially imputed. As
Section~\ref{sec:timestamps} sets out, only \ExactPrem\% of premium and
\ExactNonPrem\% of non-premium grounded URLs carry an exact time. As a robustness check, we report four different conventions and verify that our results are largely stable across them.
That said, we note that this robustness may not generalize to other datasets, so further study on the impact of imputation methods is warranted.  

A deeper point about ordering is that we observe \emph{who filed first},
which is not the same as \emph{who found out first}. Two outlets may work
the same story independently and file an hour apart; one may lift the
other's reporting wholesale and file a minute later. The index records
both as one cluster with two publishers and a filing order, and nothing in
our data distinguishes independent discovery from copying. \PROV{}
therefore prices \emph{priority}, not \emph{originality}, and reads as a
rule about original reporting only under the assumption that the two
coincide. That assumption is plausible in a fast-moving news
episode, where being first usually does mean having done the work, but it
is an assumption.

The index is also bounded by the crawl, and our estimates condition on
retrieval. The global index is global relative to what was crawled, not
relative to everything published. The direction of this bias is signable:
a larger corpus can only increase the mean number of publishers per block,
which can only widen the gap between rules that price availability and
rules that price originality. Our estimates of that gap are therefore
best read as lower bounds. Separately, all our rules are computed over
blocks that were actually \emph{cited}, so they still condition on the
retriever's choices. The framework does not require this---covering sets
are defined over the whole index---but our measurements inherit it.

A further constraint is extraction and clustering error. The block index
is produced by a language model and inherits its mistakes, both in
decomposing articles and in deciding that two blocks express the same
content. Under-clustering inflates the singleton share and biases toward
rules that reward indispensability; over-clustering does the reverse.
Appendix~\ref{app:robustness} perturbs the index directly along all three
error modes---misfiled kinds, clusters wrongly split, clusters wrongly
merged---at rates up to \ErrMaxRate\%. We view this as a reasonable upper bound on plausible
error rate for this pipeline. Our main results are orobust to these perturbations: the
\PROV{}--\EQUAL{} spread stays between \ErrSpreadLo{} and \ErrSpreadHi{}
percentage points against a baseline of \SpreadPooled{}. 

Finally, certain potentially relevant parameters are not observable in our dataset. We take answer quality
as given and make no claim that block-based generation produces better
answers, only cheaper ones. We do not model publishers' entry or
investment decisions, so %
any discussion of publisher incentives is informal rather than an equilibrium analysis.
We also do not observe
prices: we allocate shares of a normalized pool, leaving the level of the
pool---which is what publishers actually care about---entirely outside
the paper.

\section{Conclusion}
\label{sec:conclusion}

The click is disappearing as a way to compensate publishers for their content, 
and something has to replace it. While there is broad agreement that publishers should be
paid for the content ingested and served by AI to answer queries, the norms and mechanisms by which payments are dispersed have not yet been settled.

We have argued three things. First, a variety of natural credit attribution rules---equal split, first-mover, and completer--emerge as Shapley values in a general framework of news readership games.

Second, we have argued that the distinction between these rules is meaningful in empirical data. 
We find that attributing credit based on a first-mover rule rather than equal split shifts five to eight percentage points
of value toward premium publishers, robustly across four ways of handling
incomplete timestamps. This suggests a non-trivial impact on usage payments, though arguably not a complete transformation of investment incentives.

Third, we explored the impact of attributing credit with respect to 
a \emph{global index} of everything published, at index time, versus evaluating credit against
the \emph{local index} of sources retrieved for one query at inference time.
In the global index a cited news fact has a mean
of \MeanCovGlobal{} publishers and the rules separate. In the local index
\SingletonLocal\% of cited facts have a single publisher and every rule
results in essentially identical payments. We conclude that a global index is critical for designing allocation policies that meaningfully capture an intended metric of publisher contribution.
Attribution
computed when a query arrives can only ever see the local index, so it can
never price the counterfactual the market cares about---what this
publisher is worth given everything else that exists. Attribution computed
when a document is indexed can, with the architectural advantage that this determination does not impose a significant computational cost per query. 

Finally, a note on what a marketplace settlement layer requires that a metering
scheme does not. A global indexing system can be observed and verified: Publishers can be
shown their blocks, who else
carried them, and all timestamps. This observability enables contracts written on these verifiable
states. The auditability of
an index-based measurement system therefore enables contractable payments to publishers along with the feedback needed to help publishers invest in content that is valued.  We argue that this is an important step toward developing marketplaces and mechanisms that maintain the long-term health and stability of the content ecosystem in the age of AI-generated answers.

\clearpage
\bibliography{refs}

\clearpage
\appendix
\section{Proofs}
\label{app:proofs}

\subsection*{Proposition~\ref{prop:random}}

Fix a block $b$ with covering set $P_b$ and let $\pi$ be uniform over
orderings of the publishers. Under linear $u$, the increment in
\eqref{eq:unified} contributed by block $b$ is $1/n$ for whichever member
of $P_b$ appears earliest in $\pi$ and zero for all others. By exchangeability
each member of $P_b$ is earliest with probability $1/|P_b|$, so the expected
credit from $b$ is $1/(n|P_b|)$ for each $p\in P_b$ and zero otherwise.
Summing over blocks and using linearity of expectation gives
$\credit_p=\frac{1}{n}\sum_b \mathbbm{1}\{p\in P_b\}/|P_b|=\credit^{\EQUAL{}}_p$.
\qed

\subsection*{Proposition~\ref{prop:chrono}}

Fix a block $b$ and let $\pi$ place the publishers in the order in which
they published. Under linear $u$ the increment in \eqref{eq:unified}
contributed by $b$ is $1/n$ for whichever member of $P_b$ comes first in
$\pi$ and zero for the rest, exactly as in
Proposition~\ref{prop:random}; the only change is that $\pi$ is now a
point mass rather than uniform, so no expectation is taken. The earliest
member of $P_b$ under a chronological $\pi$ is $f_b$, the first publisher
of $b$. Summing over blocks,
$\credit_p=\frac{1}{n}\sum_b\mathbbm{1}\{f_b=p\}=\credit^{\PROV{}}_p$. \qed

\subsection*{Proposition~\ref{prop:complete}}

The coverage fraction $c_p(\pi)/n$ takes values in the finite set
$\{0,1/n,\dots,1\}$, on which $x^{\alpha}\to\mathbbm{1}\{x=1\}$ pointwise as
$\alpha\to\infty$. The increments in \eqref{eq:unified} lie in $[0,1]$
and $\pi$ ranges over a finite set, so the limit passes through the
expectation. In the limit the increment for $p$ under $\pi$ is
$\mathbbm{1}\{c_p(\pi)+a_p(\pi)=n\}-\mathbbm{1}\{c_p(\pi)=n\}$, which is one exactly
when the publishers ahead of $p$ in $\pi$ leave some block uncovered
and those up to and including $p$ cover every block. That is the marginal
contribution of $p$ to $v_{\mathrm{all}}$ at the predecessor set induced
by $\pi$, and averaging it over uniform $\pi$ is the definition of the
Shapley value of $v_{\mathrm{all}}$, which is $\credit^{\COMPL{}}$. \qed

\subsection*{Proposition~\ref{prop:shapley}}

Let $v_b(S)=\mathbbm{1}\{S\cap P_b\neq\emptyset\}$, so that
$v_{\mathrm{lin}}=\frac{1}{n}\sum_{b}v_b$. Fix $b$. Any $p\notin P_b$
satisfies $v_b(S\cup\{p\})=v_b(S)$ for all $S$ and is therefore a null
player, receiving zero. Any two $p,q\in P_b$ are interchangeable in
$v_b$, so by symmetry they receive equal amounts. Efficiency requires the
payments to sum to $v_b(\pubs)=1$. Hence each $p\in P_b$ receives
$1/|P_b|$. The Shapley value is additive in the game, so the value of
$v_{\mathrm{lin}}$ is $\frac{1}{n}\sum_b$ of these, which is exactly
$\credit^{\EQUAL{}}$. \qed

\subsection*{Proposition~\ref{prop:provshapley}}

Write $v_{\mathrm{prov}}=\frac{1}{n}\sum_b w_b$ with
$w_b(S)=\mathbbm{1}\{f_b\in S\}$. Each $w_b$ is a dictator game: $f_b$
determines the value on its own and every other publisher is a null player, so
efficiency and the null-player axiom force the Shapley value of $w_b$ to
pay $1$ to $f_b$ and $0$ to everyone else. Additivity then gives
$\credit_p=\frac{1}{n}\sum_b\mathbbm{1}\{f_b=p\}=\credit^{\PROV{}}_p$. \qed

\begin{remark}
The symmetry axiom is what separates the two representations. In
$v_{\mathrm{lin}}$ the members of $P_b$ are interchangeable, so any
symmetric value must split $b$ equally among them and \PROV{} is
unreachable; \PROV{} is recovered either by dropping symmetry, which is
what a non-uniform $\pi$ does, or by moving to a game in which the
publishers of $b$ are no longer interchangeable, which is what
$v_{\mathrm{prov}}$ does. Timestamps enter the mechanism at exactly one
point, and this is it.
\end{remark}

\begin{remark}
Note that $v_{\mathrm{all}}=\min_b v_b$. Unlike $v_{\mathrm{lin}}$ it is
not additively separable across blocks, which is why \COMPL{} admits no
closed form and must be evaluated by sampling permutations.
\end{remark}

\section{Robustness}
\label{app:robustness}

\subsection{The premium classification}

The premium tier is the least mechanical construct in the paper, so we
report the headline under four alternative definitions. The concern is
that a hand-assigned prestige coding might be doing the work that we
attribute to priority.

\begin{table}[h]
\centering
\begin{threeparttable}
\caption{Premium share under alternative definitions of the tier, global index.}
\label{tab:tierrobust}
\begin{tabular}{lcccc}
\toprule
Definition & Publishers & \EQUAL{} & \PROV{} & Spread\\
\midrule
Pooled premium (headline)   & \NPremium{}     & \EqGlobal\%    & \ProvAsis\%     & \SpreadPooled{}\\
Type-based newsroom coding  & \NTypeNewsroom{} & \EqTypeBased\% & \ProvTypeBased\% & \SpreadTypeBased{}\\
\texttt{premium-us} only    & \NPremiumUS{}   & \EqUSOnly\%    & \ProvUSOnly\%    & \SpreadUSOnly{}\\
\texttt{premium-intl} only  & \NPremiumIntl{} & \EqIntlOnly\%  & \ProvIntlOnly\%  & \SpreadIntlOnly{}\\
\bottomrule
\end{tabular}
\begin{tablenotes}\footnotesize
\item The type-based coding marks a publisher as a newsroom if its
independently assigned \emph{type} is one of national news, broadcast
news, wire service, business/financial, magazine/long-form or digital
news. It makes no reference to prestige and was not constructed for this
exercise. Spread is $\PROV{}-\EQUAL{}$ in percentage points.
\end{tablenotes}
\end{threeparttable}
\end{table}

The first two rows are the reassuring part. Replacing the prestige tier
with the independent type coding moves the levels by under two points and
the spread by less than a tenth of a point, from \SpreadPooled{} to
\SpreadTypeBased{}. The result is a property of which publishers report
first, not of how we labeled them.

The last two rows are not reassuring, and we think they are the more
interesting finding. The entire provenance premium accrues to American
premium outlets, whose share rises from \EqUSOnly\% to \ProvUSOnly\%, a
spread of \SpreadUSOnly{} points. International premium outlets gain
nothing: their share \emph{falls} slightly, from \EqIntlOnly\% to
\ProvIntlOnly\%, a spread of \SpreadIntlOnly{} points. \textbf{So
\PROV{} does not reward premium publishers as a class. It rewards
American premium publishers, and it is close to neutral between
international premium outlets and the standard tier.}

We do not want to over-read this from a single episode, and there is an
obvious mechanical candidate for at least part of it: the corpus was
crawled from a US-facing product, and the timestamp quality reported in
Section~\ref{sec:timestamps} may differ systematically by region, so
non-US outlets may be losing ties they in fact won. But it is exactly the
kind of incidence a marketplace designer needs to see before adopting a
first-mover rule, and it is invisible in the pooled number.

\subsection{Extraction and clustering error}
\label{app:errors}

The index is produced by a language model, so it is wrong some of the
time. We do not have ground truth against which to measure how often, so
instead we ask how much error the headline could absorb. Three error
modes move the covering set $P_b$ of a cited block in different
directions, and we simulate each directly on the index.

\begin{description}
\item[Misfiled kind] A block is recorded under the wrong kind. The
  publisher leaves $b$ and lands on the block with the same cluster
  identifier and the other kind, while publishers of that sibling leak into
  $b$. The size of $P_b$ is roughly preserved and its composition is
  churned.
\item[Wrongly split] The clusterer fails to see that two publishers said
  the same thing, so a publisher is detached into a cluster of its own.
  That cluster is not the one the answer cited, so the publisher is not
  paid; $|P_b|$ shrinks.
\item[Wrongly merged] The clusterer conflates two distinct claims, so
  publishers who do not carry $b$ are recorded as carrying it; $|P_b|$
  grows.
\end{description}

Each mode is applied independently to every publisher of every cited block
at rate $\varepsilon$, and \EQUAL{} and \PROV{} are recomputed on the
perturbed index. Both rules are closed form per block, so no sampling is
involved; we average over five replicates. At $\varepsilon=0$ the
procedure reproduces the numbers of Section~\ref{sec:headline} exactly,
which is the check that it is measuring what we intend.

\begin{table}[t]
\centering
\begin{threeparttable}
\caption{Extraction- and clustering-error robustness, global index.}
\label{tab:errors}
\begin{tabular}{llccc}
\toprule
Perturbation & Rate & \EQUAL{} & \PROV{} & Spread\\
\midrule
none                & ---   & \EqGlobal\% & \ProvAsis\% & \SpreadPooled{}\\
\addlinespace
Misfiled kind       & 0.05  & \ErrKindFiveEq\%  & \ErrKindFivePv\%  & \ErrKindFiveSp{}\\
                    & 0.20  & \ErrKindTwentyEq\% & \ErrKindTwentyPv\% & \ErrKindTwentySp{}\\
\addlinespace
Wrongly split       & 0.05  & \ErrSplitFiveEq\%  & \ErrSplitFivePv\%  & \ErrSplitFiveSp{}\\
                    & 0.20  & \ErrSplitTwentyEq\% & \ErrSplitTwentyPv\% & \ErrSplitTwentySp{}\\
\addlinespace
Wrongly merged      & 0.05  & \ErrMergeFiveEq\%  & \ErrMergeFivePv\%  & \ErrMergeFiveSp{}\\
                    & 0.20  & \ErrMergeTwentyEq\% & \ErrMergeTwentyPv\% & \ErrMergeTwentySp{}\\
\bottomrule
\end{tabular}
\begin{tablenotes}\footnotesize
\item Mean of five replicates. Spread is $\PROV{}-\EQUAL{}$ in percentage
points. A rate of $0.20$ means one publisher record in five is wrong, which
is far beyond any error rate we would defend for this pipeline; it is
included to show where the result would have to break.
\end{tablenotes}
\end{threeparttable}
\end{table}

Table~\ref{tab:errors} reports the result, and it is dull in the way one
wants a robustness table to be dull. Across all three modes and rates up
to \ErrMaxRate\%, the spread stays between \ErrSpreadLo{} and
\ErrSpreadHi{} points against a baseline of \SpreadPooled{}. The reason
is structural rather than lucky. Splitting and misfiling both thin the
covering sets---mean $|P_b|$ falls from \MeanCovGlobal{} to about
\ErrMeanCovTwenty{} at the highest rate---but they thin the sets of
premium and non-premium publishers in the same proportion, and both rules
are homogeneous of degree zero in a uniform thinning. Only merging moves
the spread at all, and it moves it \emph{up}: conflating claims imports
extra publishers into a block, which dilutes the equal split faster than it
dilutes a first-mover rule whose winner is usually already the earliest
publisher in the enlarged set.

The one caveat worth stating is that our perturbations are uniform, and
real extraction error will not be. If the model is systematically worse
at parsing one class of outlet---shorter wire copy, non-English sources,
paywalled excerpts---the error is correlated with the very tier
distinction the headline is about, and no amount of uniform perturbation
would reveal it. Measuring that requires labeled ground truth, which we
do not have.

\subsection{Order and utility}

Section~\ref{sec:alpharesults} reports the article-level against
publisher-level randomization and the Monte Carlo validation. To
summarize: the two randomizations differ by \TVArtPubGlobal{} percentage
points of total variation in the global index and \TVArtPubLocal{} in the
local index, and the sampled $\alpha=1$ arm reproduces the closed-form
\EQUAL{} payout vector to \McNoise{} points, which is the sampling error
implied
by $200$ permutations per query and bounds the precision of every
sampled quantity we report.

\subsection{Monte Carlo convergence}
\label{app:converge}

Every $\alpha$-indexed quantity in this paper is estimated from $200$
publisher orderings per query. This section asks what that number buys,
and it can ask the question unusually sharply because at $\alpha=1$ the
estimand is known in closed form: by Proposition~\ref{prop:shapley} the
sampler is estimating the equal per-block split, which we can also
compute exactly. So we can measure error against ground truth rather than
only against the sampler itself.

We run \MCReps{} independent streams of \MCMaxPerm{} orderings per query
and read the estimate off prefixes of each stream, which yields every
sample size from one pass. Two quantities are reported, and they answer
different questions. \emph{Accuracy} is the total variation between the
sampled credit vector and the exact one, available only at $\alpha=1$.
\emph{Precision} is the mean pairwise total variation between the
independent streams, which is defined for every rule.

Figure~\ref{fig:converge} shows both. The accuracy curve falls at a
fitted rate of $m^{\MCSlopeTruth}$, and the precision curves at rates
between $m^{\MCSlopeRepLo}$ and $m^{\MCSlopeRepHi}$: the textbook
$m^{-1/2}$, with no sign of a slower regime. Concretely, the distance
from the truth falls from \MCTruthTwentyFive{} points at $m=25$ to
\MCTruthTwoHundred{} at $m=200$ and \MCTruthEightHundred{} at
$m=\MCMaxPerm{}$.

The two measures are related in a way that lets us export the closed-form
benchmark to the rules that have none. If two independent estimates carry
independent errors of comparable size, the distance between them should
exceed the distance from either to the truth by a factor of $\sqrt 2$. At
$\alpha=1$ the observed ratio is \MCRepRatio{}, against $\sqrt 2 \approx
1.414$. Taking that seriously, the between-stream distances at $m=200$
imply errors of roughly \MCErrAlphaTwo{} points at $\alpha=2$,
\MCErrAlphaEight{} at $\alpha=8$ and \MCErrCompl{} for the completer.
The completer is the hardest of the family to estimate, which is what one
expects of a rule that depends on the identity of a single pivotal
publisher rather than on a smooth average.

These magnitudes are what license the readings in
Section~\ref{sec:alpharesults}. The gap of \McNoise{} points
between the closed-form \EQUAL{} rule and its sampled counterpart is
squarely inside the sampling error and should be read as noise, which is
the point of reporting it. By contrast the distances that carry the
paper's argument---\TVEqComplGlobal{} points from \EQUAL{} to the
completer, \TVProvComplGlobal{} points from \PROV{} to the
completer---are more than an order of magnitude larger than the noise
floor, so no plausible increase in $m$ would disturb them.

One asymmetry is worth flagging because it recurs elsewhere in the paper.
The \emph{aggregate} converges far faster than the \emph{allocation}. At
$m=200$ the premium share is within \MCPremErr{} points of its exact
value while the credit vector is still \MCTruthTwoHundred{} points away
in total variation, a factor of about fifty; the premium share is already
accurate to a twentieth of a point at $m=25$. Averaging over a thousand
publishers cancels most of the sampling error, but the individual
payments it cancels are exactly what a marketplace would actually pay
out. A settlement system should therefore not choose $m$ by watching the
headline. The same caution appears substantively in
Section~\ref{sec:alpharesults}, where \PROV{} and the completer agree on
the aggregate to within a hundredth of a point while differing by
\TVProvComplGlobal{} points on who gets paid.

\begin{figure}[tbp]\centering
  \includegraphics[width=\textwidth]{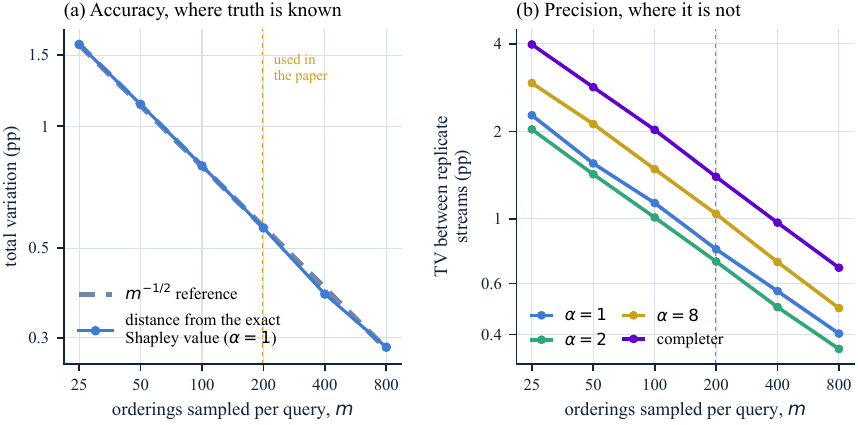}
  \caption{\textbf{Monte Carlo convergence.} Panel (a): distance from the
  exact Shapley value at $\alpha=1$, against a $m^{-1/2}$ reference the
  data are indistinguishable from. Panel (b): mean pairwise distance
  between \MCReps{} independent streams, for rules with no closed form.
  The dashed vertical line marks the $m=200$ used throughout the
  paper.}\label{fig:converge}
\end{figure}

\todo{Still to run: the four timestamp conventions at publisher level
rather than in aggregate.}

\section{Discussion: Global indexing}
\label{sec:globalindex}

In Section~\ref{sec:lego} we described our system for answering queries and attributing credit using the global index. 
We now return to our choice to use a global, pre-computed index and compare it with alternative solutions.

\subsection{Index scope: local and global}

Recall from Section~\ref{sec:answering} that, for any given query, a search engine is used by a retriever system to return a small grounding set of articles. Every rule in Section~\ref{sec:mechanism} is defined with respect to a choice of covering
sets $P_b$ for each block $b$ extracted from the grounding set. The choice of $P_b$ requires deciding which corpus to consult,
and there are two coherent answers.

\begin{description}
\item[The global index.] $P_b$ is every publisher in the index who
  carried block $b$, whether or not their article was retrieved for this
  query. The counterfactual is one of \emph{publication}: what would the
  answer have looked like if this fact had never been reported at all?
\item[The local index.] 
$P_b$ is every publisher who carried block $b$ \emph{in the query's grounding set}---the handful of articles the retriever
  actually surfaced. The counterfactual is one of \emph{retrieval}: what
  would the answer have looked like if we had not shown this particular
  article during grounding?\footnote{The terms collide with an established usage in
  distributed information retrieval, where ``local'' and ``global'' index
  refer to document- and term-partitioned shards of a search index. We
  use them in the economic sense throughout: the scope of the
  counterfactual, not the physical layout of the index.}
\end{description}

\subsection{Attribution timing: Index time and inference time}

A second, related aspect of our proposed appproach concerns \emph{when} attribution is
computed.

\emph{Inference-time attribution} waits for a query and, while generating an answer, works out
which of the retrieved documents caused which parts of the generated
responses. Three existing approaches to credit attribution are performed at inference time: asking the model to cite its sources
\citep{gao2023, bohnet2022}; perturbing the context and re-running, as in
ContextCite \citep{cohenwang2024} or Shapley over retrieved documents
\citep{nematov2025}; and reading the model's internals
\citep{qi2024}.

\emph{Index-time attribution}, which our system employs, calculates attribution as part of the indexing work that occurs before any query exists. Each
article is decomposed into blocks and merged into the index once, when it
is crawled, so every block already carries the list of outlets that
published it and the time each of them filed. When a query arrives,
settlement simply reads the covering sets off the index and divides the fee.
This is the pipeline described in Section~\ref{sec:lego}.

Timing and scope are different distinctions, but they are not
independent. Working out attribution at inference time effectively forces the use of the local index, because
it is impractical to recalculate the global index for every query.
When building a global index in advance, one could still use the local index for attribution. We
argue below, however, that using the global index has substantial advantages for accurate credit attribution.

\subsection{The computational cost of index-time and inference-time attribution}

A practical disadvantage of index-time computation the computational cost of constructing the index in advance. Indeed, indexing is often considered to be expensive compared to inference-time calculation. We note, however, that index construction is an up-front cost that makes both answer generation and credit attribution---both per-query costs---less expensive. Moreover, in our system, the covering set of articles is already materialized in the index, so settlement of credit and payments reduces to set
arithmetic. That is also what makes the \emph{global} index relatively affordable. We therefore expect a global index with index-time attribution to be superior when the number of queries is sufficiently large, as should be the case for a content marketplace that is large enough to meaningfully influence publisher outcomes.

\subsection{Further arguments in favor of index-time computation}

We want to be fair to the inference-time literature: those methods are
excellent at what they are designed for, which is explaining and
verifying a generation. They are the wrong instrument for paying publishers,
for five reasons beyond cost.

The first and most important is that they answer the wrong
counterfactual. Inference-time methods ask which pieces of \emph{this}
context moved the output. The marketplace question is what a publisher is
worth given everything else that exists. These differ whenever a fact
appears in more than one source, which in our data is the common case. A
perturbation method that removes one outlet's document watches the answer
degrade and pays that outlet; it cannot see that three other outlets
reported the same event and any of them would have served.
Section~\ref{sec:results} shows that inside the local index every rule in
the family returns the same answer to within one percentage point, so
this is not a matter of precision: \textbf{inference-time attribution
cannot tell a fact only one outlet had from a fact twenty outlets had,
even in principle, because the other nineteen were discarded before the
query arrived.}

Second, they launder retrieval bias into payment. If the retriever is
popularity- or SEO-driven, whoever is retrieved is paid, and being paid
funds being retrieved. Inference-time attribution has no defense, because
retrieval is upstream of it. A global covering set at least allows a rule
to be defined over all publishers who reported the fact rather than the
$k$ the retriever surfaced. We note honestly in
Section~\ref{sec:limitations} that our own estimates are still computed
over \emph{cited} blocks and so still condition on retrieval; the
framework does not have to.

Third, they are not auditable, and markets require auditability. A
publisher cannot verify a saliency map or a perturbation run: it cannot
see the prompt, cannot re-run the model, and cannot reproduce a sampled
generation. A block index is a persistent, inspectable artifact---these
are your blocks, this is who else carried them, this is when you
filed---against which disputes can be resolved. Contracts must be written
on verifiable states, and an unauditable metering standard is not a
metering standard.

Fourth, they are unstable and gameable. Generation is stochastic and
self-citation is unreliable: \citet{liu2023} find only 51.5\% of
sentences fully supported by their cited sources. Paying on a signal with
that error rate is untenable. Duplication attacks---republishing content
so that more chunks are retrieved---inflate credit under any
per-retrieved-chunk scheme, whereas cross-publisher deduplication
neutralizes them by construction, since a block is credited once however
many documents carry it.

Fifth, provenance is impossible at inference time. First-mover credit
requires knowing who reported a fact first. That is a property of the
corpus's history, not of a single prompt, and no amount of context
attribution recovers it. Any marketplace that wishes to reward original
reporting must therefore maintain something like a block index.

\subsection{Counter-arguments to the global index}

Three concessions are due. First, a global index is susceptible to \emph{extraction error}: index-time
attribution inherits whatever mistakes the decomposition and clustering
make, and inference-time methods have no such dependency. We investigate these errors in our empirical exercise via robustness checks in 
Section~\ref{sec:limitations}. Second, there is a non-trivial cost of \emph{maintaining the index}. This is a recurring bill:
fresh content must be indexed before it can be attributed, which is
a latency the inference-time approach does not have. 
As we argued above, this cost is per-document and may be quite low when amortized over all queries.
But in a setting that requires serving few queries over a large and fast-churning corpus, our global indexing method may not be appropriate.
Third,
\emph{the two methods are complementary}: even when index-time attribution is effective for credit attribution, inference-time attribution remains the
right tool for verifying that a model actually used what it cites. A
mature marketplace would plausibly use inference-time methods to police
the citation step and index-time methods to settle attribution.

\ifDRAFT
  \clearpage
  \listoftodos[\textbf{Outstanding items}]
\fi

\end{document}